\documentclass[11pt]{article}

\usepackage{cite}

\usepackage{amsmath}
\usepackage{algorithm}
\usepackage{algpseudocode}
\usepackage{enumerate}
\usepackage{enumitem}
\usepackage{amsthm}
\usepackage{amssymb}
\usepackage{pgf,tikz}
\usepackage{thmtools}
\usepackage{lscape}
\usepackage{subcaption}
\usepackage{hyperref}
\usepackage{mathrsfs}
\usepackage{enumitem}
\usepackage{multicol}
\usepackage{hyperref}
\usepackage{mathrsfs}

\usepackage[utf8]{inputenc}
\usepackage[T1]{fontenc}
\usepackage{lmodern}
\usepackage{ulem}

\usetikzlibrary{arrows,matrix}
\usetikzlibrary{shapes.geometric}




\newtheorem{teo}{Theorem}[section]

\newtheorem{cor}[teo]{Corollary}
\newtheorem{lem}[teo]{Lemma}
\newtheorem{Def}[teo]{Definition}
\newtheorem{ex}[teo]{Example}


\providecommand{\F}{\ensuremath{\mathbb{F}_{q}} }

\providecommand{\Fn}{\ensuremath{\mathbb{F}_{q}^{n}} }

\providecommand{\gf}[1]{\ensuremath{\mathbb{F}_{#1}}}

\hyphenation{op-tical net-works semi-conduc-tor}

\title{$\F[G]$-modules and $G$-invariant codes}

\author{
E. J. Garcia-Claro\thanks{Research project supported by CONACyT (Consejo Nacional de Ciencia y Tecnología, Mexico) Grant 401846.} and H. Tapia-Recillas\\ 
\small{Department of Mathematics}\\
\small{Universidad Autónoma Metropolitana-Iztapalapa, Mexico city, Mexico}\\
\small{eliasjaviergarcia@gmail.com}\\
\small{htr@xanum.uam.mx}}

\begin{document}

\maketitle

\begin{abstract}
If $\F$ is a finite field, $C$ is a vector subspace of $\Fn$ (linear code), and $G$ is a subgroup of the group of linear automorphisms of $\Fn$,  $C$ is said to be $G$-invariant if $g(C)=C$ for all $g\in G$. A solution to the problem of computing all the $G$-invariant linear codes $C$ of $\Fn$  is offered. This will be referred to as the invariance problem. When $n=|G|t$, we determine conditions for the existence of an isomorphism of $\F[G]$-modules between $\Fn$ and $\F[G]\times \cdots \times \F[G]$ ($t$-times), that preserves the Hamming weight. This reduces the invariance problem to the determination of the $\F[G]$-submodules of $\F[G]\times \cdots \times \F[G]$ ($t$-times).  The concept of the Gaussian binomial coefficient for semisimple $\F[G]$-modules, which is useful for counting $G$-invariant codes, is introduced. Finally, a systematic way to compute all the $G$-invariant linear codes $C\subseteq \Fn$ is provided, when $(|G|,q)=1$.
\end{abstract}

\textbf{Keywords.}
$G$-invariant codes, semisimple group algebras, Gaussian binomial coefficient, semisimple $\F[G]$-modules, primitive idempotents.

%

%
%
%
%

 

\section{Introduction}\label{intro}

Throughout this work $\F$ denotes the finite field with $q$ elements. A code $C$ is a vector  subspace of $\F^{n}$.  The Hamming distance $d(\bf{x}, \bf{y})$ between two vectors $\textbf{x}, \textbf{y} \in \F^n$ is defined to be the number of coordinates in which $\bf{x}$ and $\bf{y}$ differ. The Hamming weight $wt(\bf{x})$
of a vector $\textbf{x} \in \Fn$ is the number of non-zero coordinates in $\bf{x}$. If $C$ is a code, its minimum weight is $wt(C):=\min\{wt(c)\, : \, c\in C\}$. The group of linear automorphisms of $C$ is denoted by $Aut_{\F}(C)$, and the monomial automorphism group  of $C$ is denoted by $MAut(C)$. Given a code, to make emphasis on its dimension $k$ and the length $n$ of its  vectors (codewords), it is usually  referred to as an $[n,k]$-code. Given an $[n,k]$-code, to make emphasis on its minimum weight $\delta$, it is usually  referred to as an $[n,k, \delta]$-code.\\
The group algebra $\F[G]$ of a finite group $G$ over $\F$, is the set of the formal linear combinations of elements in $G$ with coefficients in $\F$, i.e., $\F[G]:=\left\lbrace \sum_{g\in G} a_{g} g \, : \, a_{g}\in \F \right\rbrace$. This set is a ring  with the usual sum of vectors and the multiplication given by extending  the operation of $G$. 
Its identity, denoted by $1$, is the identity element of $G$ times the unity of $\F$. The Hamming weight $wt_{G}(\bf{x})$ of an element $\textbf{x} \in \F[G]$ is the cardinality of its support respect to $G$.\\
One of the most well-known classes of linear codes is the one of cyclic codes; these are precisely the invariant codes under the automorphism $\sigma$ of $\F^n$ given by $\sigma(a_0, a_1,...,a_{n-1}) := (a_{n-1}, a_0, ..., a_{n-2})$   \cite[$Ch.4$]{cod2}. Cyclic codes and some of their generalizations, such as consta-cyclic codes and quasi-cyclic codes, respond to the question of finding the invariant codes under some automorphisms of $\Fn$. For instance, cyclic codes are the invariant codes under the cyclic shift $\sigma$; quasi-cyclic codes are the invariant codes under $\sigma^k$, where $k$ is a divisor of $n$; and consta-cyclic codes are those which are invariant under the mapping $\rho(a_0, a_1,...,a_{n-1}) := (ca_{n-1}, a_0, ..., a_{n-2})$ with $c\in \F^*$ \cite{conscod}. All of these families of codes have in common that they are invariant under the groups $\langle \sigma \rangle$, $\langle\sigma^k \rangle$ and $\langle \rho \rangle$, respectively.\\
In general, given  $G\leq Aut_{\F}(\F^n)$, one might ask which are the codes $C\subset \Fn$ such that $g(C)=C$ for all $g\in G$. Offering an answer to this question is the first aim of this work; the second aim is to develop a formula to count $G$-invariant codes. When $G\leq Aut_{\F}(\F^n)$, $\F^n$ can be endowed with a structure of left $\F[G]$-module where $\left( \sum_{g\in G } \lambda_{g}g \right)\cdot  v:= \sum_{g\in G } \lambda_{g}g(v)$ for all $v\in \Fn$. So,  the $G$-invariant codes of $\Fn$ are precisely its $\F[G]$-submodules. Henceforth, every module will be assumed to be a left module.\\
If $C\subset \Fn$ is a $G$-invariant code for some subgroup $G\leq Aut_{\F}(\Fn)$, then $\{g\restriction_{C} \, : \, g\in G\}\leq Aut_{\F}(C)$. Thus the invariance problem is related to the problem of finding codes which have a non-trivial permutation (monomial) automorphism group. Some authors have addressed this last problem. For example, in \cite{ccagp},  B. K. Dey and B. Sundar Rajan investigated the algebraic structure of a class of codes that they called $G$-invariant codes. In their work,  a $G$-invariant code over $\F$ is a code that is closed under an arbitrary abelian group of permutations with exponent  relatively prime to $q$. They characterized the dual codes  and self-duality of these $G$-invariant codes. Furthermore, they offered a minimum weight  bound for $G$-invariant codes and extended Karlin's decoding algorithm \cite{karlin} from systematic quasi-cyclic codes to systematic quasi-abelian codes. In \cite{cppg}, W. Knapp and P. Schmid consider $[n,k]$-codes  $C\subseteq \Fn$ such that "the permutation part of their monomial automorphisms" given by $MAut_{Pr}(C):=\{f\in S_{n} : df\in MAut(C) \}$ (where $d$ is represented by a diagonal invertible matrix on the canonical basis) contains $A_{n}$, $S_{n}$, or the \textit{Mathieu group}. They proved that  if $n> 6$ and $A_{n}\subseteq MAut_{Pr}(C)$, then $C$ should be equivalent to the zero code, $\F^n$, the repetition code or its dual. Moreover, they classified (up to equivalence) the few exceptions that occurred  when $n\leq 6$ and studied the case in which $MAut_{Pr}(C)$ contains the \textit{Mathieu group} (but not $A_{n}$).\\
To obtain the main results of this work, the $G$-invariant codes are considered as semisimple $\F[G]$-modules (see this definition in Section \ref{stwo}). Taking advantage of the property of semisimplicity, a method to compute them and a formula to count them is developed. Our results apply for arbitrary finite groups, these might be permutation groups or not\footnote{Actually these groups might have elements that do not preserve the Hamming weight, but for some applications of the results to coding theory, these can be regarded as groups of permutations or monomial transformations.}.\\
This work addressed different questions related to the invariance problem. It is organized as follows: In Section \ref{stwo}, some preliminaries about semisimple modules and some results that will be used later are presented. Then, in Section \ref{sthree}, we address the question of determining when $\Fn$, with a structure of $\F[G]$-module as above, is isomorphic, with an isomorphism that preserves the Hamming weight, to a direct sum of copies of  $\F[G]$. A  particular clear  answer to this, but not unique, occurs when $G$ is the group generated by the cyclic shift of $\Fn$. Throughout Sections \ref{sfive} and \ref{ssix}, we introduce and study the concept of   the \textit{Gaussian binomial coefficient} for finite semisimple $\F[G]$-modules, and develop an algorithm to efficiently compute all the possible sums of a collection of simple isomorphic $\F[G]$-submodules of a given finite $\F[G]$-module. This algorithm is used later in Section \ref{sseven} to provide a method to determine all the $G$-invariant codes of $\Fn$ when $(|G|,q)=1$. Finally, in Sections \ref{seight} and \ref{snine}, we propose a theoretically possible solution to solve the invariance problem when our method could not be applied, and give some examples illustrating the most important results, respectively.

\section{Preliminaries}\label{stwo}

As it was mentioned above, the $G$-invariant codes in $\F^n$ are precisely its  $\F[G]$-submodules. We use this module structure they possess to compute them and to count them. For that, these modules are required to be semisimple. In this section  we present the concept of a semisimple module among some other essential definitions and  results that will be used in the upcoming sessions. For reviewing properties about semisimple modules and rings see \cite[Section $3B$]{rep3} and \cite[Section 25]{rep2}. \\
Let $A$  be a ring, $M$ and $S$ be  $A$-modules. $S$ is a \textbf{simple module} if $S$ does not have  proper submodules different from $\textbf{0}$. $M$ is a  \textbf{semisimple module} if it can be expressed as a direct sum of simple submodules; or equivalently,  if for any $A$-submodule $N$ of $M$ there exists a submodule  $U$ of $M$ such that $M= N\oplus U$. $A$ is a \textbf{semisimple ring} if  every non-zero $A$-module is semisimple.\\
Two elements $x,y\in A$ are called \textbf{orthogonal} if $xy=yx=0$. An element $e\in A$ is called \textbf{idempotent} if $e^2=e$; and it is \textbf{primitive} if $e=f+g$, where $f,g\in A$ are orthogonal idempotents,  implies $f=0$ or $g=0$.\\
Let $M$ and $N$ be  $A$-modules. If there exists an $A$-module $U$ such that $M\cong N\oplus U$, it is said that $N$ \textbf{divides} $M$, and denoted by $N\mid M$. If $M$ is a semisimple finitely generated $A$-module, and $S$ is a simple $A$-module such that $S\mid M$, the \textbf{multiplicity} $n$ of $S$ in $M$ is defined as the greatest natural number such that $nS\mid M$ with $nS=S\oplus \cdots \oplus S$ ($n$-times). Let $I\subset A$ be an ideal. If I is a simple $A$-submodule, it is called minimal ideal. Let $M$ be a finitely generated $A$-module over a semisimple ring. If $\{Af_{1},...,$
  $ Af_{r}\}$ is a collection of minimal ideals of $A$ such that $Af_{j}\mid M$ for all $j\in \{1,...,r\}$ and for any simple $A$-submodule $N$ of $M$ there exists a unique $j\in \{1,...,r\}$ such that $N\cong Af_{j}$, then $\{Af_{1},..., Af_{r}\}$ will be said to be a \textbf{basic set of ideals}  for $M$. A collection of idempotents $\{e_{1},...,e_{r}\}\subset A$   such that $\{Ae_{1},..., Ae_{r}\}$ is a basic set of ideals  for $M$ will be called a \textbf{ basic set of idempotents } for $M$. If $e\in A$ is a primitive idempotent such that $Ae\mid M$, then the \textbf{homogeneous component associated with $e$ ($Ae$)} will be the $A$-submodule of $M$ defined by $\sum_{U\leq M \; \wedge U\cong Ae}U$\footnote{Similarly, it will be said that $e$ ($Ae$) is associated with $H$.}.\\
Schur's lemma \cite[Lemma 2.6.14]{grouprings}  and Maschke's theorem \cite[Theorem 3.4.7]{grouprings} are well-known results that could be found in many books of Algebra. However, these are usually  presented in different  contexts and ways. For that reason, and to make easier the reading this work, we present them below in a context that is necessary for our applications.

\begin{lem}[Schur's lemma]\label{schur} Let $M$ and $N$ be simple $\F[G]$-modules. Let $f :M\rightarrow N$ be a non-zero morphism. Then $f$ is an isomorphism.
\end{lem}

\begin{teo}[Maschke's theorem]\label{maschke} 
Let $G$ be a finite group. Then, the group algebra $\F[G]$ is semisimple if and only if $|G|$ is invertible in $\F$.
\end{teo}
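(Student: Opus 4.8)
$\F[G]$ is semisimple if and only if $|G|$ is invertible in $\F$.

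Let me sketch both directions.

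**Forward direction ($\F[G]$ semisimple $\Rightarrow$ $|G|$ invertible):** The contrapositive is cleaner. Suppose $|G| = 0$ in $\F$, i.e., $\text{char}(\F) = p$ divides $|G|$. I would exhibit a submodule of $\F[G]$ (as a left module over itself) that has no complement. The natural candidate is the one-dimensional submodule spanned by $\hat{G} := \sum_{g \in G} g$, or relatedly the augmentation ideal. Actually the cleanest: consider the augmentation map $\varepsilon : \F[G] \to \F$, $\sum a_g g \mapsto \sum a_g$, whose kernel $I$ (the augmentation ideal) is a submodule. If $\F[G] = I \oplus J$ for some submodule $J$, then $J$ is one-dimensional, $J = \F e$ for an idempotent $e$ generating a complement; one shows $e$ must be a scalar multiple of $\hat{G}$, but $\hat{G}^2 = |G|\hat{G} = 0$, so $\hat{G}$ can't be rescaled to an idempotent — contradiction. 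This shows $\F[G]$ is not semisimple.

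**Reverse direction ($|G|$ invertible $\Rightarrow$ $\F[G]$ semisimple):** This is the heart of the matter and the part I'd expect to be the main work. I would use the characterization of semisimplicity already stated in the Preliminaries: a module $M$ is semisimple iff every submodule has a complement. So it suffices to show that for any $\F[G]$-module $M$ and any submodule $N \leq M$, there is a submodule $U$ with $M = N \oplus U$. Start with an $\F$-linear projection $\pi_0 : M \to N$ (which exists since we're over a field, just extend a basis), then average it over $G$ to make it $\F[G]$-linear: define
$$\pi(m) := \frac{1}{|G|} \sum_{g \in G} g \cdot \pi_0(g^{-1} \cdot m).$$
Here is exactly where $|G|$ being invertible is used. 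Then I would verify three things: (i) $\pi$ maps $M$ into $N$ (each summand $g\cdot\pi_0(g^{-1}m) \in N$ since $N$ is a submodule); (ii) $\pi$ restricts to the identity on $N$ (for $n \in N$, $\pi_0(g^{-1}n) = g^{-1}n$, so the sum telescopes to $\frac{1}{|G|}\sum_g n = n$); (iii) $\pi$ is $\F[G]$-linear, which follows from the standard reindexing $h \mapsto gh$ in the sum. Then $U := \ker \pi$ is an $\F[G]$-submodule and $M = N \oplus U$ by the usual projection argument. Since every module is semisimple, $\F[G]$ is a semisimple ring by definition.

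**Main obstacle:** There is no deep obstacle; the only subtle point is verifying $\F[G]$-linearity of the averaged projection $\pi$ carefully (the reindexing step), and being explicit that the factor $1/|G|$ is what forces the hypothesis. For the forward direction, the care needed is in pinning down that any complement to the augmentation ideal would have to be spanned by $\hat G$ up to scalar, which uses that $\hat G$ spans the $G$-fixed points of $\F[G]$ under left multiplication together with $\hat G \cdot g = \hat G$.
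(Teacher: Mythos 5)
Your proposal is correct, but there is nothing in the paper to compare it with: the paper does not prove Maschke's theorem at all, it merely restates it in the form needed later and cites \cite[Theorem 3.4.7]{grouprings}, treating it as a classical black box. Your sketch is the standard textbook argument and both directions are sound. For the reverse direction, the averaging operator $\pi(m)=\frac{1}{|G|}\sum_{g\in G}g\,\pi_0(g^{-1}m)$ is indeed an $\F[G]$-linear projection of $M$ onto $N$ (the substitution $g\mapsto hg$ gives equivariance, and invertibility of $|G|$ is exactly what makes $\pi$ restrict to the identity on $N$), and combined with the characterization of semisimplicity stated in the paper's preliminaries (every submodule is a direct summand) this yields semisimplicity of every $\F[G]$-module, hence of the ring. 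For the forward direction, the one point you should make fully explicit is why a one-dimensional complement $J$ of the augmentation ideal $I$ is spanned by a \emph{left-$G$-fixed} element: a priori a generator $e$ of $J$ only satisfies $ge=\lambda(g)e$ for some character $\lambda$, but since $\varepsilon$ is injective on $J$ (because $J\cap I=0$) and $\varepsilon(ge)=\varepsilon(e)$, one gets $ge=e$ for all $g$; only then does the coefficient comparison force $e$ to be a scalar multiple of $\hat G=\sum_{g\in G}g$, and $\hat G^{2}=|G|\hat G=0$ (using that the characteristic divides $|G|$) contradicts $e$ being a nonzero idempotent. With that sentence added your argument is complete, and it is more self-contained than the paper, which delegates the proof entirely to the cited reference.
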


 Theorem  \ref{maschke} and \cite[Theorem 25.10]{rep2}  imply the following result: If $G$ is a finite group such that $(|G|,q)=1$ and $0\neq M$ is a finite $\F[G]$-module, then there exists  a collection $\{I_{j}\}_{j=1}^{t}$ of minimal ideals of $\F[G]$ such that $M\cong \oplus_{j=1}^{t} I_{j}$. This result gives some light on how to compute the $G$-invariant codes in $\F^n$. First, all the $\F[G]$-submodules ($G$-invariant codes) of $\F^n$ isomorphic to minimal ideals (i.e., simple $\F[G]$-modules) should be computed (a solution to this is presented in Section \ref{steps}). Then, all the possible direct sums of these modules should be determined (a solution to this is presented in Section \ref{ssix}).\\
The following lemma is just a slight modification of \cite[Theorem 4.3]{nagao} (part $i$), in the context of $\F[G]$-modules.

\begin{lem}\label{iso-ev}
 Let $A=\F[G]$. $e\in A$ be a primitive idempotent,  $I=Ae$, and $M$ an $A$-module. Then, 
$\eta_{e}: Hom_{A}(Ae,M)\mapsto eM$ given by $\varphi\mapsto\varphi(e)$ is an isomorphism of $\F$-vector spaces.
\end{lem}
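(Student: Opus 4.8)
The plan is to exhibit the inverse map explicitly and check it is well-defined and $\F$-linear, which is the cleanest route for a statement of this type. First I would verify that $\eta_e$ is a well-defined $\F$-linear map: for $\varphi \in Hom_A(Ae,M)$, since $e = e^2 \in Ae$, we have $\varphi(e) = \varphi(ee) = e\varphi(e) \in eM$, so $\eta_e$ lands in $eM$; linearity is immediate from the pointwise vector-space structure on $Hom_A(Ae,M)$.

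Next I would construct the candidate inverse $\theta_e : eM \to Hom_A(Ae,M)$. Given $m \in eM$, write $m = em'$, and define $\theta_e(em')$ to be the map $ae \mapsto ae\cdot m' = aem'$ — equivalently, right multiplication by $m$ followed by the module action, i.e. the $A$-homomorphism $Ae \to M$ sending $ae \mapsto (ae)m$. Here one should note that since $m \in eM$ we have $em = m$, so $(ae)m = aem = ae m$ is unambiguous once we use $m$ itself rather than a choice of preimage $m'$; this sidesteps any well-definedness worry. This map is $A$-linear because it is left multiplication composed with the module action. Then $\eta_e(\theta_e(m)) = \theta_e(m)(e) = em = m$, and conversely $\theta_e(\eta_e(\varphi))$ sends $ae \mapsto (ae)\varphi(e) = \varphi(aee) = \varphi(ae)$, so $\theta_e \circ \eta_e = \mathrm{id}$ on $Hom_A(Ae,M)$, using $A$-linearity of $\varphi$ and $e^2 = e$. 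Both composites being the identity gives the bijection, and $\theta_e$ is visibly $\F$-linear as well.

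The main subtlety — and the only place the hypothesis that $e$ is \emph{primitive} might seem relevant — is whether primitivity is actually needed for this isomorphism. It is not: the argument above works for any idempotent $e \in A$, and the statement holds in that generality. The primitivity hypothesis is carried along only because in the applications $I = Ae$ will be a minimal ideal (a simple module), where one additionally wants $\dim_{\F} eM$ to compute the multiplicity of $Ae$ in $M$ via Schur's lemma (Lemma \ref{schur}). So in writing the proof I would simply prove the general statement and remark that primitivity plays no role here. No genuine obstacle is expected; the only care required is to phrase $\theta_e$ in terms of $m$ directly (using $em = m$) rather than in terms of a preimage, so that well-definedness is automatic.
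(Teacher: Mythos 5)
Your proof is correct: the paper does not write out an argument at all but simply cites \cite[Theorem 4.3]{nagao}, and your explicit evaluation map with inverse $m\mapsto (ae\mapsto (ae)m)$, using $em=m$ to avoid any well-definedness issue, is precisely the standard proof of that cited fact. Your side remark is also accurate — primitivity of $e$ plays no role in the isomorphism $Hom_{A}(Ae,M)\cong eM$ and is only relevant in the paper's later applications (e.g.\ combining with Schur's lemma to read off multiplicities), so nothing is missing.
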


Note that a basic set of idempotents is a collection of primitive orthogonal idempotents. Its elements are primitive because they generate minimal ideals, and are orthogonal by Lemmas \ref{iso-ev} and \ref{schur}.\\ 
In the following example the decomposition of $\gf{5}[S_{3}]$ into its homogeneous components is computed. This will be used  later again in Example \ref{ex-nocom}. 

\begin{ex}\label{ex1}
Let $S_{3}=\langle a,b \mid a^3=b^2=1, \, bab^{-1}=a^2 \rangle=\{1,a,a^2,b,ba,ba^2\}$ and $A=\gf{5}[S_{3}]$. As $(|S_{3}|, 5)=1$, $A$ is semisimple (by Theorem \ref{maschke}). Then $\{e_{1}=1+a+a^2+b+ba+ba^2, e_{2}=1+a+a^2 +4b+4ba+4ba^2, e_{3}=2+3a^2 +2b +3ba^2, e_{4}=2+3a+3b+2ba^2 \}$ is a set of  primitive orthogonal idempotents of $A$ such that $1=\sum_{i=1}^{4}e_{i}$ \footnote{This idempotents were computed using  \cite[Theorem 3.3]{dfp}  and doing some computations in Sage.}. By Lemmas \ref{iso-ev} and \ref{schur}, $Ae_{1}\ncong Ae_{2}$  because $e_{1}A e_{2}=e_{1}\gf{5} e_{2}=\gf{5}e_{1}e_{2}=0 $. Besides, $ Ae_{3}\cong Ae_{4}$ (by \cite[Proposition 1.3]{ako}) because  $y=1+2a+b$ is a unity in $A$ (with inverse $2+4a+4b+2ba+2ba^2$) such that $ye_{3}y^{-1}=e_{4}$. Therefore, $H_{1}=Ae_{1}$, $H_{2}= Ae_{2}$ and $H_{3}=Ae_{3}\oplus Ae_{4}$ are the homogeneous components of $A$.
\end{ex}  


From now on,
\begin{itemize}
    \item[$\bullet$]  $A$ will denote the semisimple group algebra of a finite group $G$ over  the finite field $\F$ (when $(|G|,q)=1$), unless stated otherwise.
    \item[$\bullet$] If $M$ and $N$ are modules over a ring, $N\leq M $ will denote that $N$ is a submodule of $M$.   
\end{itemize}

\begin{lem}\label{ci-simp}
 Let $I\leq A$ be a minimal ideal such that $I$ has multiplicity $1$ in $A$ and $n\in \mathbb{Z}^+$. If $N\neq 0$ is a cyclic $A$-submodule of $nI$, then $N\cong I$.
\end{lem}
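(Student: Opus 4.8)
The plan is to combine two standard features of semisimple modules: that a submodule of a homogeneous semisimple module is again homogeneous of the same type, and that a cyclic module over a semisimple ring is isomorphic to a direct summand of the regular module. Note first that $A=\F[G]$ is semisimple by Maschke's theorem (Theorem~\ref{maschke}), since $(|G|,q)=1$, so both of these features are available; moreover $A$, being finite dimensional over $\F$, is a finitely generated semisimple $A$-module, so the notion of multiplicity from Section~\ref{stwo} applies to it.

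First I would extract the structure of $N$ forced by the ambient module $nI$. Since $I$ is a minimal ideal it is a simple $A$-module, so $nI=I\oplus\cdots\oplus I$ is a homogeneous semisimple module: by Schur's lemma (Lemma~\ref{schur}) every simple submodule of $nI$ is isomorphic to $I$. A nonzero submodule $N\leq nI$ is then semisimple, all of whose simple constituents are isomorphic to $I$; decomposing $N$ into simple submodules gives $N\cong mI$ for some integer $m$ with $1\leq m\leq n$ (in particular $m$ is finite).

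Next I would use that $N$ is cyclic. Choose $x\in N$ with $N=Ax$; the map $A\to N$, $a\mapsto ax$, is a surjective $A$-morphism with kernel the left ideal $\mathrm{Ann}_A(x)$, so $N\cong A/\mathrm{Ann}_A(x)$. Because $A$ is semisimple, $\mathrm{Ann}_A(x)$ has a complement, $A=\mathrm{Ann}_A(x)\oplus C$, whence $N\cong C$ and in particular $C\mid A$. Together with the previous step, $mI\cong N\cong C$ and $C\mid A$, so $mI\mid A$.

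Finally, the hypothesis that $I$ has multiplicity $1$ in $A$ says exactly that the greatest natural number $k$ with $kI\mid A$ is $1$; hence $m\leq 1$, and since $N\neq 0$ we conclude $m=1$, i.e. $N\cong I$. The step I expect to need the most care — rather than ingenuity — is the second one: verifying that passing to a submodule of the homogeneous module $nI$ cannot introduce simple constituents of a new isomorphism type, which is precisely where Schur's lemma and semisimplicity are invoked; once this is secured, the multiplicity bookkeeping with the cyclic hypothesis is routine.
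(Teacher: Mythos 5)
Your proof is correct and follows essentially the same route as the paper's: you realize the cyclic module $N$ as $A/\mathrm{Ann}_A(x)$, hence as a direct summand of the semisimple algebra $A$, note via Schur's lemma that $I$ is the only simple type occurring in $nI$ (so $N\cong mI$), and then use the multiplicity-one hypothesis to force $m=1$. The only cosmetic difference is that you pin down $N\cong mI$ before invoking cyclicity, while the paper does it afterward; the argument is the same.
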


\begin{proof}
Let $N\neq 0$ be a cyclic $A$-submodule of $nI$. Then there exists $0\neq x\in nI$ such that $N=Ax$ and thus $N$ is isomorphic to $A/ann(x)$ (where $ann(x)$ is the annihilator of $x$). So $N$ is isomorphic to a direct complement $S$ of $ann(x)$ in $A$, which there exists because $A$ is semisimple, and hence $N\cong S$. On the other hand, the unique simple $A$-submodule (up to isomorphism)  that divides $nI$ is $I$ and thus the unique simple $A$-submodule (up to isomorphism)  that divides $N$ is $I$. Consequently, there exists $k\in \mathbb{Z}^+$ with $k\leq n$ such that $kI\cong N\cong S\leq A$ and as the multiplicity of $I$ in $A$ is $1$, then $k=1$ and so $I\cong N$.
\end{proof}

\section{Isomorfism of \Fn with $\F[G]\times \cdots \times \F[G]$}\label{sthree}

Let $C_n$ denote the cyclic group of order $n$ generated by the cyclic shift $\sigma$. This group acts by evaluation on $\Fn$  endowing it with a structure of $\F[C_n]$-module. Another natural $\F[C_{n}]$-module is $\F[C_{n}]$ itself. A classic way of determining the cyclic linear codes of $\Fn$ is by using the  isomorphism of $\F[C_n]$-modules,  $\phi: \Fn \rightarrow \F[C_n]$  given by $(a_0,a_1,...,a_{n-1}) \mapsto  a_0+ a_1\sigma+...+a_{n-1}\sigma^{n-1}$, this provides a bijection between the ideals of the group algebra $\F[C_n] $ and the invariant codes of $\Fn$ under the cyclic shift $\sigma$. Furthermore, this $\phi$  preserves the Hamming weight, i. e.,  for all $v\in \Fn$, $wt(v)= wt_{C_n}(\phi(v))$. Let us take that situation to a more general context. If $G\leq Aut_{\F}(\Fn)$ and $n=t|G|$, one may ask whether  there exists an isomorphism of $\F[G]$-modules between $\Fn$ and $\oplus_{i=1}^{t}\F[G]$ preserving Hamming weight. The following result will help to answer that question.\\
Let $G$ be a finite group. If $M$ is an $\F[G]$-module, the representation induced by left multiplication by elements of $G$ in $M$ is $\rho: G\rightarrow Aut_{\F}(M)$ given by $g\mapsto \rho(g)$, where $\rho(g)(v)=gv$ for all $v\in M$.  If $R=\oplus_{i=1}^{t}\F[G]=\F[G]\times \cdots \times \F[G]$ ($t$-times) is an external direct sum of $\F[G]$ (i.e. with pointwise  addition  and pointwise  left multiplication by elements of $\F[G]$), the Hamming weight is defined on $R$ as $wt'((i_{1},...,i_{t}))=\sum_{j=1}^{t} wt_{G}(i_{j})$ for all $(i_{1},...,i_{t})\in R$.

\begin{teo}\label{isoe}
Let $G\leq Aut_{\F}(\Fn)$ with $n=|G|\cdot t$. Let $\pi_{j}:R\rightarrow \F[G]$ be  given by $(i_{1},...,i_{t})\mapsto i_{j}$ for $j=1,...,t$, where $R= \F[G]\times \cdots \times
  \F[G]$ ($t$-times); let  $\mu$ be the canonical basis of $\Fn$, and $\eta=\cup_{j=1}^{t}G_{j}$ where $G_{j}=\{v\in R \,: \, \pi_{j}(v)\in G\, and \, \pi_{i}(v)=0,\; for\; i\neq j\}$. Let $\rho:G\rightarrow Aut_{\F}(\Fn)$, $\rho':G\rightarrow Aut_{\F}(R)$ be the representations induced by the $\F[G]$-modules structure of $\Fn$ (given by evaluation) and $R$ (given by left multiplication), respectively. Let $g\in G$, $[\rho(g)]_{\mu}$ be the matrix of $\rho(g)$ with respect to the basis $\mu$, and $[\rho'(g)]_{\eta}$ the matrix of $\rho'(g)$ with respect to the basis $\eta$. Let $S=\{s_{1},...,s_{k}\}$ be a generating set for $G$, and $M\leq GL(n,q)$ the group of monomial matrices. Then, there is a bijection $\phi$ from $H:=\{A\in M\, : \, [\rho(s_{i})]_{\mu}=A^{-1}[\rho'(s_{i})]_{\eta}A,\, i=1,...,k \}$ to the set $L$ of the $\F[G]$-isomorphisms from $\Fn$ to $R$ that preserve the Hamming weight given by $A\mapsto f_{A}$ with $[f_{A}(v)]_{\eta}:=A[v]_{\mu}$ for all $v\in \Fn$.
\end{teo}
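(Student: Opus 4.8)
The plan is to pass from $\F$-linear maps $\Fn\to R$ to their matrices with respect to $\mu$ and $\eta$, translate each of the three requirements defining an element of $L$ — being an $\F$-linear bijection, preserving the Hamming weight, and being a morphism of $\F[G]$-modules — into a property of the matrix, and then read off the asserted bijection. First I would fix the dictionary: since $G$ is an $\F$-basis of $\F[G]$, the set $\eta=\cup_{j=1}^{t}G_{j}$ is an $\F$-basis of $R$ with $|\eta|=t|G|=n=\dim_{\F}R$; hence sending an $\F$-linear $f$ to the matrix $A$ determined by $[f(v)]_{\eta}=A[v]_{\mu}$ for all $v$ is a bijection from $Hom_{\F}(\Fn,R)$ onto the $n\times n$ matrices over $\F$, it is inverse to $A\mapsto f_{A}$, and it carries composition to matrix product. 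The observation that makes the whole argument work is that, read in the coordinates furnished by $\mu$ and by $\eta$, both $wt$ on $\Fn$ and $wt'$ on $R$ are simply the number of nonzero coordinates of a length-$n$ vector — which is precisely why $wt'$ was defined as the sum of the $wt_{G}$ of the $t$ components.

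Next I would characterize, through $A$, when $f_{A}\in L$. First, $f_{A}$ is an $\F$-linear bijection iff $A\in GL(n,q)$. Second, $f_{A}$ preserves the Hamming weight iff $A$ is monomial: a monomial matrix clearly preserves the number of nonzero coordinates, and conversely, if $f_{A}$ is weight-preserving and bijective, then for each $b\in\mu$ the image $f_{A}(b)$ has weight one, hence is a nonzero scalar times a single element of $\eta$; since the vectors $\{f_{A}(b):b\in\mu\}$ form a basis of $R$, these elements of $\eta$ are pairwise distinct, so $A$ has exactly one nonzero entry in each row and in each column. As monomial matrices are invertible, the first two conditions together amount to $A\in M$. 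Third, $f_{A}$ is $\F[G]$-linear iff $f_{A}\circ\rho(g)=\rho'(g)\circ f_{A}$ for all $g\in G$ (testing on elements of $G$ suffices, as $f_{A}$ is $\F$-linear and $G$ spans $\F[G]$), which in matrices reads $A[\rho(g)]_{\mu}=[\rho'(g)]_{\eta}A$ for all $g\in G$, equivalently — once $A$ is invertible — $[\rho(g)]_{\mu}=A^{-1}[\rho'(g)]_{\eta}A$ for all $g\in G$.

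Then I would reduce the last condition to the generating set $S$. For a fixed invertible $A$, the set $T=\{g\in G:[\rho(g)]_{\mu}=A^{-1}[\rho'(g)]_{\eta}A\}$ contains $1$ and, because $\rho$ and $\rho'$ are group homomorphisms and conjugation by $A$ is multiplicative, is closed under products and inverses; hence $T$ is a subgroup of $G$, so $S\subseteq T$ forces $T=G$. Therefore, for $A\in M$, the third condition is equivalent to $[\rho(s_{i})]_{\mu}=A^{-1}[\rho'(s_{i})]_{\eta}A$ for $i=1,\ldots,k$, i.e.\ to $A\in H$.

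Finally I would assemble the pieces: $f_{A}\in L$ if and only if $A\in H$. Indeed, for $A\in H$ the map $f_{A}$ is monomial hence a weight-preserving bijection, and satisfies the generator relations hence is $\F[G]$-linear, so it is a bijective morphism of $\F[G]$-modules, i.e.\ an isomorphism; conversely any $f\in L$ has monomial matrix $A=[f]$ (so $A\in M$) satisfying the relations for all $g\in G$ and in particular for the $s_{i}$ (so $A\in H$), with $f_{A}=f$. Thus $\phi\colon H\to L$, $A\mapsto f_{A}$, is well defined and surjective, and it is injective because $A$ is recovered from $f_{A}$ via $[f_{A}(v)]_{\eta}=A[v]_{\mu}$; its inverse is $f\mapsto$ ``the matrix of $f$ with respect to $\mu$ and $\eta$''. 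The only step with genuine content is the equivalence ``weight-preserving $\Leftrightarrow$ monomial matrix'', which rests squarely on both $\mu$ and $\eta$ being bases all of whose members have weight one; the remainder is bookkeeping plus the standard fact that a multiplicatively stable matrix relation holding on a generating set holds on the whole group.
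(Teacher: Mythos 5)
Your proposal is correct and follows essentially the same route as the paper's proof: pass between maps and their matrices with respect to $\mu$ and $\eta$, identify weight preservation of a bijection with the matrix being monomial, and identify $\F[G]$-linearity with the conjugation relations, checked only on the generators $s_{1},\dots,s_{k}$. Your only addition is to make explicit (via the subgroup $T=\{g\in G:[\rho(g)]_{\mu}=A^{-1}[\rho'(g)]_{\eta}A\}$) the step from the generating set to all of $G$, which the paper leaves implicit.
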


\begin{proof} 
Let $wt$ and $wt'$ denote the Hamming weight on $\Fn$ and $R$, respectively. By definition $wt' (r):=\sum_{j=1}^{t}wt_{G}(\pi_{j}(r))$. Let $h:L\rightarrow H$ be given by $f\mapsto _{\eta}[f]_{\mu}$, for all $f \in L$.\\
First, we show that $\phi$ has, in fact, $L$ as its codomain.  Let $A\in H$,  then  $A[\rho(s_{i})]_{\mu}=[\rho'(s_{i})]_{\eta} A$ for $i=1,...,k$. Let $f:\Fn \rightarrow R$ be the linear transformation given by $[f(v)]_{\eta}:=A[v]_{\mu}$, then $_{\eta}[f]_{\mu}=A$ so that $f(\rho(s_{i}))=\rho'(s_{i})(f)$ for $i=1,...,k$, and thus $\phi(A)=f$ is an isomorphism of $\F[G]$-modules. Now we observe that $f$ preserves the Hamming weight. For $v\in \Fn$, $[f(v)]_{\eta}=A[v]_{\mu}$ and hence $wt(v)=wt(A[v]_{\mu})$ because $A$ is a monomial matrix. But  $wt([f(v)]_{\eta})=\sum_{j=1}^{t} wt([\pi_{j}(f(v))]_{G})=\sum_{j=1}^{t} wt_{G}(\pi_{j}(f(v)))=wt'(f(v))$, by definition of $wt'$, and so $wt(v)=wt'(f(v))$.\\
Next, it is shown that $H$ is, in fact, the codomain of $h$. Let $f\in L$ and $g\in G$, then $f(\rho (g))=\rho' (g)(f)$ and thus $_{\eta}[f\circ\rho (g)]_{\mu}=_{\eta}[\rho' (g)\circ f]_{\mu}$. Let $A:= _{\eta}[f]_{\mu}$, then $[\rho(g)]_{\mu}=A^{-1}[\rho'(g)]_{\eta} A$. It remains to show that $A$ is a monomial matrix. Let $e_{i}$ be the $i$-th canonical vector of $\mu$, then  $wt(Ae_{i})=wt(A[e_{i}]_{\mu})=wt([f(e_{i})]_{\eta})=w
t'(f(e_{i}))=wt(e_{i})=1$,
where the penultimate equality  follows from the fact that $f$ preserves the Hamming weight. Thus $Ae_{i}=c_{i}e_{j_{i}}$ with $j_{i}\in \{1,...,n\}$ and $c_{i}\neq 0$. Hence the $i$-th column of $A$ is $c_{i}e_{j_{i}}$. Furthermore, $A$ is invertible. So if $k,l \in \{1,...,n\}$ and $k\neq l$, then $j_{k}\neq j_{l}$,  and so there exists $\tau\in S_{n}$ such that $Ae_{i}=c_{i}e_{\tau(i)}$ for all $i$.  As it is clear that $\phi$ and $h$ are mutually inverse, the proof is complete.
\end{proof}

\begin{ex}
Let $\alpha$ be the automorphism of $\gf{3}^4$ given by $\alpha(a_{0},a_{1},a_{2},a_{3}):=(a_{1},a_{0},a_{3},a_{2})$, and $G:=\langle \alpha \rangle=\{1, \alpha\}$. $G$ has order 2. Thus one might ask if $\gf{3}^4$  and $R=\gf{3}[G]\times \gf{3}[G]$ are isomorphic as $\gf{3}[G]$-modules with an isomorphism that preserves Hamming weight. Let $\rho:G \rightarrow Aut_{\gf{3}}(\gf{3}^{4})$ be the inclusion and $\rho':G \rightarrow Aut_{\gf{3}}(R)$ be given by $\alpha\mapsto l_{\alpha}$ where $l_{\alpha}$ is the left multiplication by $\alpha$. Let $\mu$ denote the canonical basis of $\gf{3}^4$ and $\eta=\{(1,0),(\alpha,0),(0,1),(0, \alpha)\}$. If we consider the monomial matrix  
$$M:= \begin{bmatrix}
1&0&0&0\\
0&0&1&0\\
0&0&0&1\\
0&1&0&0
\end{bmatrix}$$

then $M[\rho(\alpha)]_{\mu}=[\rho'(\alpha)]_{\eta}M$. Hence, according to Theorem \ref{isoe}, $f:\gf{3}^4\rightarrow R$ given by $[f(v)]_{\eta}:=M[v]_{\mu}$ is an isomorphism of $\gf{3}[G]$-modules that preserves the Hamming weight. Consequently,  $C\subseteq \gf{3}^4$ is a $G$-invariant code if and only if $f(C)$ is a  $\gf{3}[G]$-submodule ($2$-quasi-cyclic code) of $R$ which is isomorphic to $C$ as   $\gf{3}[G]$-module and as metric space, i.e., the $G$-invariant codes  in $\gf{3}^4$ are equivalent to the $2$-quasi-cyclic codes. 

\end{ex} 


\section{The Gaussian binomial coefficient for semisimple $\F[G]$-modules}\label{sfive} 

In this section, a Gaussian binomial coefficient for finite $A$-modules is introduced, and some of its properties are studied. This coefficient will be useful for counting $G$-invariant codes when $(|G|,q)=1$.

\begin{Def}[Gaussian binomial Coefficient] Let $N$ and $M$ be finite $A$-modules. The Gaussian binomial coefficient of $M$ in $N$ is defined as $$\binom{M}{N}_{q}:= |\{U\leq M \; : \; U\cong N \}|.$$\end{Def}

\begin{lem}\label{same-bino}Let $M$, $N$, and $T$ be finite $A$-modules. Let $I\leq A $ be a minimal ideal such that $I\mid M$. Then the following hold:

\begin{enumerate}
\item $\binom{M}{N}_{q}\neq 0$ if and only if $N \mid M$.
\item If $N\mid M$, then $\binom{N}{I}_{q}\leq \binom{M}{I}_{q}$.
\item If $U$ is an $A$-module such that $\binom{kI}{I}_{q}\leq \binom{U}{I}_{q}$, then $kI\mid U$.
\item If $Hom_{A}(T,N)=0$, then  $\binom{M\oplus N}{T}_{q}=\binom{M}{T}_{q}$. 
\item If $Hom_{A}(T,N)=0$, then $\binom{M}{T\oplus N}_{q}=\binom{M}{T}_{q}\binom{M}{N}_{q}$.

\end{enumerate}
\end{lem}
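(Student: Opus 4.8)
The plan is to prove the five items roughly in order, leaning throughout on the semisimplicity of $A$ (so every submodule is a direct summand and multiplicities are well-defined) and on Schur's lemma. For item (1), if $N\mid M$ then there is a submodule $U\leq M$ with $U\cong N$, so the set counted by $\binom{M}{N}_q$ is nonempty; conversely any $U\leq M$ with $U\cong N$ is a direct summand of $M$ by semisimplicity, which is exactly $N\mid M$. For item (2), write $M\cong N\oplus U$; every copy of the simple module $I$ inside $N$ is also a copy inside $M$, and distinct submodules of $N$ stay distinct in $M$, so the obvious inclusion of counted sets gives $\binom{N}{I}_q\le\binom{M}{I}_q$. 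For item (3), let $m$ be the multiplicity of $I$ in $U$; the homogeneous component of $U$ associated with $I$ is isomorphic to $mI$, and the number of simple submodules isomorphic to $I$ in a module depends only on that multiplicity (all copies of $I$ live in the homogeneous component), so $\binom{U}{I}_q=\binom{mI}{I}_q$. Since $\binom{kI}{I}_q\le\binom{mI}{I}_q$ and $\binom{jI}{I}_q$ is strictly increasing in $j$ (a point worth isolating, since $\binom{jI}{I}_q$ equals the number of one-dimensional subspaces of an $m$-dimensional space over the division ring $\mathrm{End}_A(I)$, hence a strictly increasing Gaussian-type count), we get $k\le m$, i.e.\ $kI\mid U$.

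For items (4) and (5) the key structural input is that $\mathrm{Hom}_A(T,N)=0$ forces $T$ and $N$ to share no common simple summand (again by Schur). In item (4), a submodule $U\leq M\oplus N$ with $U\cong T$ has every simple constituent isomorphic to a simple constituent of $T$, hence to none of $N$; projecting $M\oplus N\to M$ along $N$ is then injective on $U$ (its kernel $U\cap N$ would be a submodule of both $U$ and $N$, forcing it to be $0$), and its image is a submodule of $M$ isomorphic to $T$. Conversely any copy of $T$ in $M$ sits in $M\oplus N$. One must check this correspondence $U\mapsto \mathrm{proj}_M(U)$ is a bijection onto $\{V\leq M: V\cong T\}$: injectivity follows because $U$ is determined as a submodule of $M\oplus N$ by its image (here one uses that $U\cap N=0$ and that $U$, having no $N$-type constituents, must actually equal $\{(v,0):v\in\mathrm{proj}_M(U)\}$ — a submodule $U$ with a trivial graph; I would argue that the other "slanted" embeddings of $T$ into $M\oplus N$ would require a nonzero map $T\to N$). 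For item (5), I would induct or argue directly: choosing a submodule $W\leq M$ with $W\cong T\oplus N$ amounts to choosing its $T$-homogeneous part and its $N$-homogeneous part independently, because $T$ and $N$ have disjoint simple supports, so $W$ decomposes canonically as $W_T\oplus W_N$ with $W_T\cong T$, $W_N\cong N$, and conversely any such pair $(W_T,W_N)$ of submodules of $M$ with $W_T\cap W_N=0$ (automatic, disjoint supports) sums to a copy of $T\oplus N$; this gives the product count $\binom{M}{T}_q\binom{M}{N}_q$.

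The main obstacle I anticipate is the bijectivity claims in (4) and (5) — specifically, ruling out "twisted" or "diagonal" embeddings. The clean way to handle this is a small preliminary observation: if $X\leq Y\oplus Z$ is a submodule with $X\cap Z=0$ and $\mathrm{Hom}_A(X,Z)=0$, then $X=(\mathrm{proj}_Y(X))\oplus 0$, because the composite $X\hookrightarrow Y\oplus Z\to Z$ is a morphism from $X$ to $Z$, hence zero. Establishing this lemma-within-the-proof up front makes both (4) and (5) fall out by bookkeeping, so the only genuinely delicate point reduces to correctly applying Schur's lemma to conclude that $\mathrm{Hom}_A(T,N)=0$ passes to $\mathrm{Hom}_A(X,Z)=0$ for the relevant submodules $X$ of a $T$-isomorphic module and $Z$ in the $N$-isotypic part — which is immediate since a nonzero map would, restricted and corestricted to simple pieces, produce a nonzero map between a simple summand of $T$ and one of $N$.
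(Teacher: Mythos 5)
Your proposal is correct and follows essentially the same route as the paper: items (1)--(3) via semisimplicity and counting within the isotypic/homogeneous component (the paper leaves the strict monotonicity of $\binom{jI}{I}_{q}$ in $j$ implicit, which you rightly flag and justify), and items (4)--(5) via the observation that $Hom_{A}(T,N)=0$ means $T$ and $N$ share no simple constituents, yielding the same bijection for (5). Your only real deviation is in (4), where you note that any copy of $T$ inside $M\oplus N$ already lies in $M$ because its composite with the projection onto $N$ must vanish --- a slightly cleaner packaging of the paper's contradiction argument, which instead produces a simple submodule $S\leq L$ with $S\cap M'=0$ and derives $S\mid N$ and $S\mid T$.
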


\begin{proof}
\begin{enumerate}
\item $N\mid M$ if and only if $N$ is embedded in $M$,  which occurs if and only if $\binom{M}{N}_{q}\neq 0$.

\item It is clear.

\item If $X$ is a set of simple $A$-submodules of $U$ isomorphic to $I$ such that $\binom{kI}{I}_{q}\leq |X|$, then $\sum_{x\in X}x\cong rI$ for some $r\in \mathbb{Z}^+$. Thus $\binom{kI}{I}_{q}\leq |X| \leq \binom{rI}{I}_{q}$, implying that   $k\leq r$ and so $kI\mid rI \mid U$.

\item $Hom_{A}(T,N)=0$ if and only if $T$ and $N$ have no simple common divisors. Let $M'\cong M$ and $N'\cong N$ be such that the internal direct sum $M'\oplus N'= M\oplus N$. As $M\mid M\oplus N$, then $\binom{M}{T}_{q}=\binom{M'}{T}_{q}\leq \binom{M\oplus N}{T}_{q}=\binom{M'\oplus N'}{T}_{q}$. If the equality does not hold, there exists $L\leq M'\oplus N'$, with $L\cong T$ and $L\nleq M'$. Thus there exists a simple $A$-module $S\leq L$ such that $S\cap M'=0$. Hence $S\mid  N$, but $S\mid T$, which contradicts $Hom_{A}(T,N)=0$.

\item  Consider the function $f: D_{0}:=\{L\leq M\; : \; L\cong T\oplus N\}\longrightarrow D_{1}:= \{U\leq M\; : \; U\cong T\}\times \{Z\leq M\; : \; Z\cong  N\}$ given by $f(L)=(L_1,L_2)$ where $L=L_1\oplus L_2$, $L_1\cong T$, and $L_2\cong N$. $f$ is well defined. Otherwise, there exists $L$ in its domain such that $L=L_1\oplus L_2= X_{1}\oplus X_{2}$ where $L_1\cong X_{1} \cong T$, $L_2\cong X_{2} \cong N$, and $L_1\neq X_{1}$ or $L_2\neq X_{2}$. Without loss of generality, $L_1\neq X_{1}$ and there exists a simple $A$-submodule $S$ of $L_{1}$ that is not contained in $X_{1}$. Then, the multiplicity of $S$ in $S\oplus X_{1}\leq L\cong T\oplus N$ is greater than its multiplicity in $X_{1}\cong T$ and thus $S\mid N$, contradicting $Hom_{A}(T,N)=0$. By a similar argument, $h:D_{1}\rightarrow D_{0}$ given by $h(B,C)=B\oplus C$ is well defined so that $h$ is the inverse of $f$. Therefore, $\binom{M}{T\oplus N}_{q}=\binom{M}{T}_{q}\binom{M}{N}_{q}$. 
\end{enumerate}
\end{proof}

\begin{cor}\label{bcoef-mult} Let $M\cong \oplus_{j=1}^{t} n_{j}I_{j}$  be an $A$-module, where $I_{j}\leq A $ is a  minimal ideal and $n_{j}$ its multiplicity in $M$ for $j=1,...,t$. Let $N\leq M$, if $N\cong \oplus_{l\in J} k_{l}I_{l} $ where $J \subseteq \{1,...,t\}$, then

$$\binom{M}{N}_{q} = \prod_{l \in J} \binom{n_{l}I_{l}}{k_{l}I_{l}}_{q}.$$
\end{cor}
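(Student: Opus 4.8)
The plan is to derive Corollary~\ref{bcoef-mult} from Lemma~\ref{same-bino} by induction on $|J|$, using parts (4) and (5) to peel off one homogeneous block at a time. The key observation is that for distinct indices $l\neq l'$, the minimal ideals $I_l$ and $I_{l'}$ are non-isomorphic simple modules, so by Schur's lemma (Lemma~\ref{schur}) we have $Hom_A(k_l I_l, k_{l'} I_{l'})=0$; more generally, $Hom_A(n_l I_l, N')=0$ whenever $N'$ is a direct sum of $k_{l'}I_{l'}$'s with $l'\neq l$. This is exactly the hypothesis needed to invoke parts (4) and (5) of Lemma~\ref{same-bino}.

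First I would fix an enumeration $J=\{l_1,\dots,l_s\}$ and write $N\cong k_{l_1}I_{l_1}\oplus N'$ where $N'\cong \oplus_{i\geq 2} k_{l_i}I_{l_i}$. Since $Hom_A(k_{l_1}I_{l_1}, N')=0$, part (5) of Lemma~\ref{same-bino} gives $\binom{M}{N}_q=\binom{M}{k_{l_1}I_{l_1}}_q\binom{M}{N'}_q$. Next I would reduce each factor of the form $\binom{M}{k_{l}I_{l}}_q$ to $\binom{n_lI_l}{k_lI_l}_q$: writing $M\cong n_lI_l\oplus M_l$ with $M_l\cong\oplus_{j\neq l}n_jI_j$, we have $Hom_A(k_lI_l,M_l)=0$, so part (4) yields $\binom{M}{k_lI_l}_q=\binom{n_lI_l}{k_lI_l}_q$. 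Iterating the part-(5) splitting over all of $J$ and then applying the part-(4) reduction to each resulting factor produces the claimed product. One should be slightly careful that part (4) is stated as $\binom{M\oplus N}{T}_q=\binom{M}{T}_q$ with $Hom_A(T,N)=0$, so the decomposition $M\cong n_lI_l\oplus M_l$ must be set up with $T=k_lI_l$ playing the role of the module whose copies we count and $M_l$ the discarded summand; the isomorphism-invariance of the Gaussian coefficient (which is immediate from the definition, since $\binom{M}{N}_q$ depends only on the isomorphism types of $M$ and $N$) lets us pass freely between $M$ and its internal/external decompositions.

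I expect the main obstacle, if any, to be purely bookkeeping: making sure the ``$Hom_A=0$'' hypotheses are genuinely satisfied at each step, i.e. that after peeling off $k_{l_1}I_{l_1}$ the remaining module $N'$ still shares no simple summand with $I_{l_1}$ (true because the $I_j$ are pairwise non-isomorphic and $l_1\notin\{l_2,\dots,l_s\}$), and similarly that $M_l$ shares no simple summand with $k_lI_l$. Both reduce to the statement that distinct minimal ideals of the semisimple algebra $A$ are non-isomorphic as $A$-modules when indexed by distinct $j$ — which is built into the setup of the corollary (the $I_j$ are the distinct homogeneous types of $M$, with multiplicities $n_j$). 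Once these are in place, the induction closes with no computation. I would present the argument as a short induction on $|J|$ rather than spelling out the full iterated product, citing parts (4) and (5) of Lemma~\ref{same-bino} and Lemma~\ref{schur} at the appropriate points.
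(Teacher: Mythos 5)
Your proposal is correct and follows essentially the same route as the paper: iterate Lemma \ref{same-bino} (part $5$) to split $\binom{M}{N}_{q}$ into $\prod_{l\in J}\binom{M}{k_{l}I_{l}}_{q}$, then apply part $4$ to replace each $M$ by $n_{l}I_{l}$, with the required $Hom_{A}=0$ hypotheses holding because the $I_{j}$ are pairwise non-isomorphic. Your inductive presentation and explicit Schur's-lemma check merely spell out details the paper leaves implicit.
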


\begin{proof}
Let $J \subseteq \{1,...,t\}$ be such that $N\cong \oplus_{l\in J} k_{l}I_{l} $. As $N\leq M $, then $k_{l}\leq n_{l}$ for all $l\in J$  and thus 

\begin{eqnarray*}
\binom{M}{N}_{q}&=&\binom{\oplus_{j=1}^{t} n_{j}I_{j}}{\oplus_{l\in J} k_{l}I_{l}}_{q}\\
                &=&\prod_{l \in J} \binom{ \oplus_{j=1}^{t} n_{j}I_{j}}{k_{l}I_{l}}_{q} \; \text{by Lemma}\; \ref{same-bino} \; \text{(part $5$)}\\
                &=&\prod_{l \in J} \binom{n_{l}I_{l}}{k_{l}I_{l}}_{q} \; \text{by Lemma}\; \ref{same-bino} \; \text{(part $4$)}.\\               
\end{eqnarray*}

\end{proof}

\begin{Def}
Let $M$ be a finite $A$-module and let  $SS(M)$ be defined as the collection of all simple $A$-submodules of $M$.\\
If $X\subseteq SS(M)$ is such that $\sum_{x\in X}x=M$, then it will be said that $X$ \textit{generates} $M$, or $X$ \textit{is a generating set by simple} $A$-\textit{submodules of} $M$.\\ 
If $Y\subseteq SS(M)$ is such that $\sum_{y\in Y}y=\oplus_{y\in Y}y$, then it will be said that $Y$ is independent, or $Y$ is an \textit{ independent set of simple} $A$-\textit{submodules of} $M$.\\ 
If $X \subseteq SS(M)$ is independent and generates $M$, it will be said that $X$ is a basis for $M$, or $X$ is a \textit{basis by simple} $A$-\textit{submodules of} $M$.
\end{Def}

\begin{lem}\label{basisp} Let $M$ be a finite semisimple $A$-module. 
\begin{enumerate}
\item If $X\subseteq SS(M)$ and $X$ generates $M$, then $X$ contains a basis.

\item If $X\subseteq SS(M)$ is independent and has the cardinality of a basis for $M$, then $X$ is a basis.
\end{enumerate}
\end{lem}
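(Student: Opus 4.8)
The plan is to mimic the standard linear-algebra proofs of the exchange lemma and the fact that all bases of a vector space have the same cardinality, transferring them to the setting of semisimple modules via their simple-submodule decompositions. The key structural input is that every finite semisimple $A$-module $M$ admits a decomposition $M=\oplus_{i} S_i$ with the $S_i$ simple, and by Maschke's theorem together with the Jordan--Hölder / Krull--Schmidt principle this decomposition length is an invariant of $M$; call it the \emph{length} $\ell(M)$. I would first establish (or recall) that if $X\subseteq SS(M)$ is independent then $|X|\le \ell(M)$, with equality forcing $\sum_{x\in X}x=M$: indeed $\oplus_{x\in X}x$ is a submodule whose length is exactly $|X|$, and a proper submodule of a semisimple module has strictly smaller length since it has a nonzero direct complement. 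This single observation already gives part (2): if $X$ is independent with $|X|=\ell(M)$, then $\sum_{x\in X}x$ has length $\ell(M)=\ell(M)$, hence equals $M$, so $X$ generates $M$ and is therefore a basis.

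For part (1), suppose $X\subseteq SS(M)$ generates $M$. I would build a basis greedily: start from $X_0=\varnothing$ and, at each stage, if the current independent subset $X_j\subseteq X$ does not yet generate $M$, pick some $x\in X$ with $x\nleq \sum_{y\in X_j}y$ — such an $x$ exists because $X$ generates $M$ but $\sum_{y\in X_j}y\ne M$ — and observe that $X_{j+1}:=X_j\cup\{x\}$ is still independent. Independence is preserved because $x$ is simple and $x\cap\bigl(\oplus_{y\in X_j}y\bigr)$ is a submodule of $x$ not equal to $x$, hence zero, so the sum $\bigl(\oplus_{y\in X_j}y\bigr)+x$ is direct. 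Since $|X_j|$ strictly increases and is bounded above by $\ell(M)$ (by the bound from the previous paragraph), the process terminates at some independent $X_r\subseteq X$ with $\sum_{y\in X_r}y=M$; this $X_r$ is the desired basis contained in $X$.

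The main obstacle, and the only place requiring care, is pinning down the length invariant $\ell(M)$ and the statement ``a proper submodule of a finite semisimple module has strictly smaller length.'' This rests on: (i) semisimplicity of $A=\F[G]$ when $(|G|,q)=1$ (Maschke, Theorem~\ref{maschke}), so every submodule is a direct summand; (ii) well-definedness of the multiplicity/length, i.e. two decompositions into simples have the same multiset of isomorphism types — this is the semisimple case of Krull--Schmidt and follows from Schur's lemma (Lemma~\ref{schur}) by a counting argument on $\mathrm{Hom}$-spaces, or can be invoked from the cited references on semisimple modules. Once $\ell$ is available and additive over direct sums, everything else is the routine greedy/exchange bookkeeping sketched above; I would also note the hypothesis that $M$ is \emph{finite} is used only to guarantee $\ell(M)<\infty$ so that the greedy process in part (1) halts.
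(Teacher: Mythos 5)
Your proposal is correct, but it runs the two parts somewhat differently from the paper, so a brief comparison is worthwhile. For part (1) the paper argues top-down: if $X$ is not independent, some $x_{0}\in X$ is simple and meets the sum of the others nontrivially, hence lies inside it, so $x_{0}$ can be discarded without losing generation; iterating (which halts because $M$ is finite) leaves an independent generating subset. You instead build bottom-up, greedily adjoining elements of $X$ not contained in the current span and using simplicity to keep the sum direct, with termination guaranteed by your length bound $|X_j|\le\ell(M)$ — both are standard exchange-type arguments and both are valid; yours needs the length invariant already for part (1), while the paper's removal argument does not. For part (2) the content is essentially the same: the paper takes a basis $Y$, notes $\oplus_{x\in X}x$ is isomorphic to $\oplus_{y\in J}y$ for some $J\subseteq Y$, and invokes Krull--Schmidt to force $|J|=|Y|$, then concludes equality of submodules (using finiteness to pass from an isomorphism onto $M$ to equality of subsets); you package the same uniqueness statement as a well-defined additive length $\ell$ and observe that a proper submodule of a semisimple module has a nonzero complement and hence strictly smaller length, so an independent set of size $\ell(M)$ must already span. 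Your formulation is arguably a bit cleaner at the last step, since it never needs the cardinality comparison, and it makes explicit the fact (implicit in the statement of the lemma) that all bases of $M$ have the common cardinality $\ell(M)$; the ingredients — semisimplicity for complements and Krull--Schmidt/Schur for uniqueness of decomposition — are the same as the paper's.
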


\begin{proof}
\begin{enumerate}

\item Let $X\subseteq SS(M)$ be such that $\sum_{x\in X}x=M$. If $\sum_{x\in X}x\neq \oplus_{x\in X}x$ (i.e. $X$ is not independent), then there exists $x_{0}\in X$ such that $x_{0}\subseteq \sum_{x\in X-\{x_{0}\}}x=M$. If $ X_{0}=X-\{x_{0}\}$ is independent, the proof concludes. Otherwise, the same reasoning can be applied to $X_{0}$, and this process can be repeated until  obtaining a subset of $X$ that is independent and generates $M$.

\item Let $X, Y\subseteq SS(M)$ be such that $\sum_{x\in X}x=\oplus_{x\in X}x$, $M=\oplus_{y\in Y}y$, and $|X|=|Y|$. As $\oplus_{x\in X}x\subset M=\oplus_{y\in Y}y$,  $\oplus_{x\in X}x\cong \oplus_{y\in J}y$ with $J\subseteq Y$, and so  by Krull-Schmidt Theorem \cite[ p. 538]{ama}, $|J|=|X|=|Y|$, implying that  $\oplus_{x\in X}x=\oplus_{y\in Y}y=M$.

\end{enumerate}
\end{proof}

\begin{lem}\label{bcoef} If  $I\leq A$ is a minimal ideal, and $k,n\in \mathbb{Z}^{+}$ with $2\leq k\leq n$, then

 $\binom{nI}{kI}_q=\frac{\binom{nI}{I}_q \left[ \binom{nI}{I}_q -1 \right]\left[\binom{nI}{I}_q -\binom{2I}{I}_q\right]\cdots \left[\binom{nI}{I}_q -\binom{(k-1)I}{I}_q\right]}{\binom{kI}{I}_q \left[ \binom{kI}{I}_q -1 \right]\left[\binom{kI}{I}_q -\binom{2I}{I}_q\right]\cdots \left[\binom{kI}{I}_q -\binom{(k-1)I}{I}_q\right]}.$\\


\end{lem}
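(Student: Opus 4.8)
The plan is to mimic the classical double-counting argument that establishes the formula for ordinary Gaussian binomial coefficients, working with bases by simple $A$-submodules in place of bases of a vector space over $\F$. Throughout, write $b_m := \binom{mI}{I}_q$ for the number of simple $A$-submodules of $mI$ isomorphic to $I$; since $I$ has multiplicity $1$ in $A$, Lemma \ref{ci-simp} guarantees that every nonzero cyclic submodule of $mI$ (in particular every simple one) is isomorphic to $I$, so $b_m$ is also the number of \emph{all} minimal submodules of $mI$. The key structural fact I would use is that, by Lemma \ref{basisp}, a subset $X\subseteq SS(nI)$ of cardinality $k$ that is independent generates a submodule isomorphic to $kI$, and conversely every submodule isomorphic to $kI$ arises in this way from a $k$-element independent set (namely, any of its bases).

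First I would count, in two ways, the set $P$ of ordered $k$-tuples $(x_1,\dots,x_k)$ of elements of $SS(nI)$ that are independent. On one hand, build such a tuple greedily: $x_1$ can be any of the $b_n$ simple submodules of $nI$; having chosen an independent $(x_1,\dots,x_{j-1})$, their sum is isomorphic to $(j-1)I$ and contains exactly $b_{j-1}$ simple submodules, each of which is a forbidden choice for $x_j$, while $x_j$ must be one of the $b_n$ simple submodules of $nI$ not lying in that sum — and here one must check that a simple submodule $S$ of $nI$ satisfies $S\subseteq x_1\oplus\cdots\oplus x_{j-1}$ if and only if $S$ is one of those $b_{j-1}$ submodules, which follows because $S\cong I$ and $S\subseteq (j-1)I$ forces $S$ to be one of its minimal submodules. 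Hence the number of valid choices for $x_j$ is $b_n - b_{j-1}$ (with the convention $b_0 = 0$, $b_1 = 1$), giving $|P| = b_n(b_n-1)(b_n-b_2)\cdots(b_n-b_{k-1})$. On the other hand, each independent $k$-tuple spans a unique submodule $W\cong kI$, and for a fixed such $W$ the number of ordered independent $k$-tuples spanning a submodule of $W$ — equivalently, ordered bases-or-independent-tuples inside $W$ — is, by the same greedy count carried out inside $W\cong kI$, equal to $b_k(b_k-1)(b_k-b_2)\cdots(b_k-b_{k-1})$. Since by Lemma \ref{basisp}(2) a $k$-element independent subset of $kI$ is automatically a basis, every such tuple indeed has $W$ as its span, so $|P| = \binom{nI}{kI}_q \cdot b_k(b_k-1)(b_k-b_2)\cdots(b_k-b_{k-1})$. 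Equating the two expressions and solving for $\binom{nI}{kI}_q$ yields the claimed identity.

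The step I expect to be the main obstacle is the ``only if'' direction of the claim that a simple submodule $S\subseteq nI$ is contained in a given direct sum $x_1\oplus\cdots\oplus x_{j-1}\cong (j-1)I$ precisely when it is one of that summand's own $b_{j-1}$ minimal submodules — in other words, justifying that $S \not\subseteq x_1\oplus\cdots\oplus x_{j-1}$ is exactly the condition making $(x_1,\dots,x_{j-1},S)$ independent, so that the count of ``bad'' choices for $x_j$ is exactly $b_{j-1}$ and not something larger. This is where semisimplicity is essential: if $S\cap(x_1\oplus\cdots\oplus x_{j-1}) = 0$ then the sum $S + x_1 + \cdots + x_{j-1}$ is direct, while if $S\cap(x_1\oplus\cdots\oplus x_{j-1})\neq 0$ then by simplicity $S\subseteq x_1\oplus\cdots\oplus x_{j-1}$; combined with Lemma \ref{ci-simp} to identify minimal submodules of $(j-1)I$ with the $b_{j-1}$ counted copies, this closes the gap. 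A secondary point requiring care is the bookkeeping of the base cases $b_0,b_1$ and verifying that the formula as displayed (which starts its products at the factor after $b_n-1$, i.e. groups $b_n(b_n-1)$ together) matches the greedy count term by term; this is routine once the indices are aligned.
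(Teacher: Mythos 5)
Your proposal is essentially the paper's own argument: the paper counts unordered independent $k$-element subsets of $SS(nI)$ and divides by the number of bases by simple submodules of $kI$ (both counts carrying a factor of $k!$ that cancels), which is exactly your ordered-tuple double count, with the same greedy step (the $\binom{nI}{I}_q-\binom{(j-1)I}{I}_q$ choices at stage $j$) and the same appeal to Lemma \ref{basisp} (part 2) to identify independent $k$-sets in $kI$ with bases. One small caveat: Lemma \ref{bcoef} does not assume that $I$ has multiplicity $1$ in $A$, so the fact you need --- that every simple submodule of $mI$ is isomorphic to $I$ --- should be justified directly from semisimplicity (a simple submodule is a direct summand of $mI$, hence isomorphic to $I$ by Krull--Schmidt) rather than through Lemma \ref{ci-simp}, whose multiplicity-one hypothesis is not available here.
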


\begin{proof}
Let $X=\{N\leq nI \, : \, N\cong kI\}$, and $Y=\{A\subseteq SS(nI)\, : \, |A|=k\, \text{and} \, A \, \text{is independent}\}$ the collection of basis by simple $A$-submodules of the elements of $X$ (by Lemma \ref{basisp}, part $2$). Let $R$ be the equivalence relation on $Y$ given by $aRb$ if and only if $a$  generates the same $A$-module as $b$. If $Y/R$ is the quotient set determined by $R$, then the function $f: Y/R\longrightarrow X$ given by $[x]\mapsto \sum_{m\in [x] }m$ is  a bijection. So $\lvert Y/R \rvert=\lvert X \rvert=\binom{nI}{kI}_q$. On the other hand, every equivalence class in $Y/R$ has the same cardinality, which is the number  of basis by simple $A$-submodules of $kI$. Thus $\lvert Y/R \rvert= \frac{|Y|}{|[x_0]|}$ for some $x_{0}\in Y$ and hence $\binom{nI}{kI}_q=\frac{\text{ \# of l. i. sets of simple} \,A\text{-submodules of}\, nI \,\text{with size} \,k}{ \text{ \# of basis by simple}\, A\text{-submodules of}\, kI}$.\\

 For building an independent set of simple $A$-submodules of $nI$ having cardinality $k$ (i.e., an element of $Y$), we should start by taking a simple $A$-module $S_{1}$, and for that, we have $\binom{nI}{I}_q$ possible choices. For taking another simple $A$-module $S_{2}$ such that the collection $\{S_1, S_2\}$  remains independent, we have $\left[ \binom{nI}{I}_q -\binom{I}{I}_q \right]$ possible choices. To take a simple $A$-module $S_{3}$ such that the collection $\{S_1, S_2, S_3\}$ remains independent, we have $\left[ \binom{nI}{I}_q -\binom{2I}{I}_q \right]$ possible elections. In general, to chose a simple $A$-module $S_{k}$ such that the collection $L=\{S_1, S_2, S_3,...,S_{k-1},S_k\}$ remains independent, we have $\left[ \binom{nI}{I}_q -\binom{(k-1)I}{I}_q \right]$ possibilities. Therefore, at the end of this process, we will construct an independent set of simple  $A$-submodules of $nI$ having size $k$. We did the choices of the $A$-modules $S_{i}$ without worrying about the order, and depending on that, the same set $L$ can be obtained, so that $|Y|=\frac{\binom{nI}{I}_q \left[ \binom{nI}{I}_q -\binom{I}{I}_q \right]\left[\binom{nI}{I}_q -\binom{2I}{I}_q\right]\cdots \left[\binom{nI}{I}_q -\binom{(k-1)I}{I}_q\right]}{k!}.$
 
To build a basis by simple $A$-submodules of $kI$, we  could apply the same reasoning used before. For that, it should be taken into account that an independent set of simple $A$-submodules of $kI$ that has cardinality $k$ is, in fact, a basis for $kI$ (by Lemma \ref{basisp}, part $2$) and hence  $|[x_0]|=\frac{\binom{kI}{I}_q \left[ \binom{kI}{I}_q -\binom{I}{I}_q \right]\left[\binom{kI}{I}_q -\binom{2I}{I}_q\right]\cdots \left[\binom{kI}{I}_q -\binom{(k-1)I}{I}_q\right]}{k!}.$  

\end{proof}


\begin{lem}\label{simp-part} Let $I\leq A $ be a minimal ideal with multiplicity $1$ in $A$, and $n\in \mathbb{Z}^+$. If $U^{*}:=U-\{0\}$ for all $U\leq nI$, then $X:=\{U^{*}\; : \; U\leq nI \; \land \; U\cong I\}$ is a partition of $(nI)^{*}$.
\end{lem}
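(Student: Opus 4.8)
The plan is to verify the three defining properties of a partition for the family $X=\{U^{*}\; : \; U\leq nI \;\land\; U\cong I\}$ of the set $(nI)^{*}$: each block is nonempty, the blocks cover $(nI)^{*}$, and any two blocks are either equal or disjoint. Nonemptiness is immediate, since $U\cong I\neq 0$ (a minimal ideal is a nonzero simple module) forces $U\neq 0$ and hence $U^{*}=U-\{0\}\neq \emptyset$.

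For the covering property I would argue as follows. The inclusion $\bigcup_{U^{*}\in X}U^{*}\subseteq (nI)^{*}$ is clear, as each such $U$ sits inside $nI$ and is nonzero. For the reverse inclusion, take any $0\neq x\in nI$ and form the cyclic $A$-submodule $Ax\leq nI$; since $1\in A=\F[G]$ we have $x=1\cdot x\in Ax$, so $Ax\neq 0$, and Lemma \ref{ci-simp} — this is precisely where the hypothesis that $I$ has multiplicity $1$ in $A$ enters — gives $Ax\cong I$. Hence $x$ lies in the block $(Ax)^{*}\in X$, so the blocks cover $(nI)^{*}$.

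For disjointness, suppose $U,V\leq nI$ with $U\cong V\cong I$ and $U^{*}\cap V^{*}\neq \emptyset$, and pick $0\neq x\in U\cap V$. Then $0\neq Ax$ is a submodule of both $U$ and $V$; since $U$ and $V$ are simple (being isomorphic to the minimal ideal $I$), minimality forces $Ax=U$ and $Ax=V$, so $U=V$ and the two blocks coincide. Combining the three steps shows that $X$ is a partition of $(nI)^{*}$.

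There is no real obstacle here: the substance is carried entirely by Lemma \ref{ci-simp}, and the remaining content is the routine observation that a simple submodule containing a nonzero element $x$ must be exactly the cyclic submodule $Ax$. The only point to be mildly careful about is to apply Lemma \ref{ci-simp} to the cyclic module generated by an arbitrary nonzero element of $nI$, which is legitimate precisely because $I$ has multiplicity one in $A$.
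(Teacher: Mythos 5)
Your proposal is correct and follows essentially the same route as the paper: coverage comes from applying Lemma \ref{ci-simp} to the cyclic submodule $Ax$ of a nonzero element $x\in nI$, and disjointness comes from the fact that two simple submodules sharing a nonzero element must coincide (the paper phrases this contrapositively as distinct simple submodules intersecting in $\{0\}$). No gaps.
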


\begin{proof} If $n=1$, the assertion is true. Otherwise, if $V,W\in X$ and $W\neq V$, then there exist $U$ and $T$ distinct simple $A$-submodules of $nI$ such that $V=U^{*}$ and $W=T^{*}$. As $U\cap T=\{0\}$, $\emptyset= U^{*}\cap T^{*}=V\cap W$. If $x\in (nI)^{*}$, $Ax\subset nI$ and thus $Ax\cong I$ (by Lemma \ref{ci-simp}) so that  $x\in Ax-\{0\}\in X$.\end{proof}

\begin{cor}\label{simp-part-cor}Let $I$ and $n$ be as in Lemma \ref{simp-part}. If $dim_{\F}(I)=k$, then $\binom{nI}{I}_{q}=\frac{q^{nk}-1}{q^{k}-1}.$

\end{cor}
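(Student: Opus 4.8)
The plan is to combine Lemma~\ref{simp-part} with a straightforward counting argument. By Lemma~\ref{simp-part}, the set $X=\{U^{*}\;:\;U\leq nI\;\land\;U\cong I\}$ is a partition of $(nI)^{*}=nI-\{0\}$. Each block $U^{*}$ has the same cardinality, namely $|I|-1$, because every simple submodule $U$ of $nI$ is isomorphic to $I$ and hence has exactly $|I|$ elements; since $\dim_{\F}(I)=k$ we have $|I|=q^{k}$, so $|U^{*}|=q^{k}-1$. The number of blocks of the partition is precisely $\binom{nI}{I}_{q}$, by the very definition of the Gaussian binomial coefficient.

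First I would compute $|(nI)^{*}|$: the module $nI$ is an $\F$-vector space of dimension $nk$, so $|nI|=q^{nk}$ and $|(nI)^{*}|=q^{nk}-1$. Then, since $X$ is a partition of $(nI)^{*}$ into $\binom{nI}{I}_{q}$ blocks each of size $q^{k}-1$, I would write
\begin{equation*}
q^{nk}-1 \;=\; |(nI)^{*}| \;=\; \sum_{U^{*}\in X}|U^{*}| \;=\; \binom{nI}{I}_{q}\,(q^{k}-1),
\end{equation*}
and solve for $\binom{nI}{I}_{q}$ to obtain $\dfrac{q^{nk}-1}{q^{k}-1}$.

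There is essentially no obstacle here; the corollary is an immediate consequence of the preceding lemma once one observes that all blocks have equal size. The only point requiring a line of justification is that every simple submodule $U$ with $U\cong I$ really has $q^{k}$ elements, which follows because an $\F[G]$-module isomorphism is in particular an $\F$-linear isomorphism, so $\dim_{\F}(U)=\dim_{\F}(I)=k$. I would state this explicitly and then perform the short computation above.
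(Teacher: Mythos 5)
Your argument is correct and is essentially identical to the paper's own proof: both use the partition of $(nI)^{*}$ from Lemma \ref{simp-part} into blocks of equal size $|I^{*}|=q^{k}-1$ and divide $|(nI)^{*}|=q^{nk}-1$ by the block size to get the count of simple submodules. The extra line justifying that every $U\cong I$ has $q^{k}$ elements is a harmless (and reasonable) elaboration of what the paper leaves implicit.
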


\begin{proof} Let $X$ be as in Lemma \ref{simp-part}, then $(nI)^{*}=\sqcup_{V\in X}V$  and so $|(nI)^{*}|=|X|\cdot |I^{*}|$. Thus  $$\binom{nI}{I}_{q}=|X|=\frac{|(nI)^{*}|}{|I^{*}|}=\frac{|nI|-1}{|I|-1}=\frac{q^{nk}-1}{q^{k}-1}.$$ 
\end{proof}

Observe that Corollary \ref{bcoef-mult} presents the Gaussian binomial coefficient as a product of simpler Gaussian binomial coefficients, which in turn are later expressed in even simpler terms in Lemma \ref{bcoef}. These last terms are finally calculated, when the minimal ideals that appear in them have multiplicity $1$ in their group algebra, in Corollary \ref{simp-part-cor}. As every minimal ideal $I$ of a semisimple commutative group algebra  $\F[G]$ has multiplicity $1$ in $\F[G]$, now we can compute any Gaussian binomial coefficient when $G$ is abelian.

\subsection{Counting all the $G$-invariant codes}

The following result plays an important role in the solution of the invariance problem.

\begin{lem}\label{cont-inv}
Let $G\leq Aut_{\F}(\Fn)$ be such that $A=\F[G]$ is  semisimple. Let  $\{I_{j}\, : \, j=1,...,r\}$ be a basic set of ideals for $\Fn$, and  $H_{j}\cong n_{j}I_{j}$ be the  homogeneous component of $\Fn$  associated with $I_{j}$ for $j=1,...,r$.  Let $S(M):=\{C\subseteq M \, : \, C \, \text{is\, a \,}  G-invariant \, code\}$ for all $G$-invariant code $M\subseteq \Fn$ and $Z:=\{\oplus_{i=1}^{r}M_{i}\, : \, M_{i} \; \text{is\, a \,}  A-submodule \, of\, H_{i},\, for\, all\, i=\{1,...,r \}$. Let $D(B):=\prod_{j\in B}\{1,..., n_{j}\}$ for all $B\in T:= 2^{\{1,...,r\}}-\{\emptyset\}$. Then,

\begin{enumerate}

\item $S(\Fn)=Z$ and $|S(\Fn)|=\prod_{j=1}^{r}|S(H_{j})|$.\\

\item $ |S(\Fn)|=\left[ \sum_{B\in T}\sum_{(t_{j})_{j}\in D(B)}\left(  \prod_{j\in B  } \binom{n_{j}I_{j}}{t_{j}I_{j}}_{q}  \right)\right] + 1$ and $\prod_{j=1}^{r}|S(H_{j})|=\prod_{j=1}^{r}\left( \sum_{t_{j}=1 }^{n_{j}} \binom{n_{j}I_{j}}{t_{j}I_{j}}_{q} + 1 \right)$.

\end{enumerate}
\end{lem}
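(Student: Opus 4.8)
The plan is to prove part 1 by exhibiting a bijection between $S(\Fn)$ and $\prod_{j=1}^{r}S(H_j)$ coming from the isotypic (homogeneous) decomposition of $\Fn$, and then to prove part 2 by computing $|S(H_j)|$ and expanding the resulting product.

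First I would record that, since $A$ is semisimple, $\Fn$ is a semisimple $A$-module and the homogeneous components associated with a basic set of ideals satisfy $\Fn=\bigoplus_{j=1}^{r}H_j$, with each $H_j$ being $I_j$-isotypic and independent from the others (every simple submodule of $\Fn$ is isomorphic to exactly one $I_j$, hence contained in $H_j$). The key step is that \emph{every} $A$-submodule $C\leq\Fn$ satisfies $C=\bigoplus_{j=1}^{r}(C\cap H_j)$ with $C\cap H_j\leq H_j$: writing $C$ as a sum of simple submodules and letting $C_j$ be the sum of those isomorphic to $I_j$, one has $C=\sum_j C_j$, the sum is direct because the $H_j$ are independent, $C_j\subseteq C\cap H_j$, and a component-wise argument using the directness of $\sum_j H_j$ gives $C\cap H_j=C_j$. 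Conversely, for any tuple $(M_j)_j$ with $M_j\leq H_j$, the module $\bigoplus_j M_j$ is a submodule of $\Fn$ and $(\bigoplus_j M_j)\cap H_k=M_k$ by the same component-wise argument. Hence $C\mapsto (C\cap H_j)_j$ and $(M_j)_j\mapsto\bigoplus_j M_j$ are mutually inverse bijections between $S(\Fn)$ and $\prod_{j=1}^{r}S(H_j)$, and the image of the second map is exactly $Z$; this yields $S(\Fn)=Z$ and $|S(\Fn)|=\prod_{j=1}^{r}|S(H_j)|$, which is part 1.

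For part 2 I would first compute $|S(H_j)|$. Since $H_j\cong n_jI_j$ is $I_j$-isotypic with multiplicity $n_j$, every nonzero submodule of $H_j$ is semisimple with $I_j$ as its only simple divisor, hence isomorphic to $t_jI_j$ for a unique $t_j\in\{1,\dots,n_j\}$ (a submodule isomorphic to $(n_j+1)I_j$ cannot embed into $n_jI_j$, by Krull--Schmidt). Partitioning $S(H_j)$ according to $t_j$ and adding $1$ for the zero submodule, and noting that an $A$-isomorphism $H_j\cong n_jI_j$ induces an isomorphism-type-preserving bijection of submodule lattices so that $\binom{H_j}{t_jI_j}_q=\binom{n_jI_j}{t_jI_j}_q$, one gets $|S(H_j)|=1+\sum_{t_j=1}^{n_j}\binom{n_jI_j}{t_jI_j}_q$. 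Substituting this into part 1 gives the second displayed identity $\prod_{j=1}^{r}|S(H_j)|=\prod_{j=1}^{r}\bigl(\sum_{t_j=1}^{n_j}\binom{n_jI_j}{t_jI_j}_q+1\bigr)$. Finally, expanding this product over the choice made in each factor — encoding the set of factors contributing a term $\binom{n_jI_j}{t_jI_j}_q$ rather than $1$ as a subset $B\subseteq\{1,\dots,r\}$, and the chosen terms as $(t_j)_{j\in B}\in D(B)$ — and isolating the $B=\emptyset$ summand (the empty product $1$) produces $\left[\sum_{B\in T}\sum_{(t_j)_j\in D(B)}\prod_{j\in B}\binom{n_jI_j}{t_jI_j}_q\right]+1$, which combined with part 1 gives the first displayed identity.

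The only genuinely delicate point is the identity $C=\bigoplus_{j=1}^{r}(C\cap H_j)$ for an arbitrary submodule $C$, i.e. that intersection with the homogeneous components is compatible with the isotypic decomposition; I would make sure the independence of the family $\{H_j\}$ is invoked cleanly both to conclude directness and to identify $C\cap H_j$ with the sum of the simple constituents of $C$ of type $I_j$. Everything after that — counting submodules of an isotypic module and expanding a finite product — is routine bookkeeping.
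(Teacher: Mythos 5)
Your proof is correct and takes essentially the same approach as the paper: your bijection $C\mapsto (C\cap H_j)_j$ is precisely the paper's map sending a code to its homogeneous components (you simply verify in detail what the paper dismisses as clear via a footnote), and your count $|S(H_j)|=1+\sum_{t_j=1}^{n_j}\binom{n_jI_j}{t_jI_j}_q$ is the paper's second identity. The only cosmetic difference is that you obtain the first identity of part 2 by expanding the product $\prod_{j}|S(H_j)|$, whereas the paper partitions $S(\Fn)$ by isomorphism type and invokes Corollary \ref{bcoef-mult}; the two computations are equivalent.
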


\begin{proof}

\begin{enumerate}

\item It is clear that $ S(\Fn)=Z$. Let $f: S(\Fn)\longrightarrow \prod_{j=1}^{r} S(H_{j})$ be given by $N\mapsto (N_{j})_{j=1}^{r}$, where $N_{j}$ is the homogeneous component of $N$ associated with $I_{j}$ if $I_{j}\mid N$ and $N_{j}=0$ otherwise for $j=1,...,r$. Then $f$ an  is invertible function \footnote{The proof of this is similar to the proof of Lemma \ref{same-bino} (part $5$).} with inverse given by $(A_{j})_{j=1}^{r}\mapsto \oplus_{j=1}^{r}A_{j}$.

\item Note that 
$S(\Fn)=\left[\bigcup_{B\in T}\bigcup_{(t_{j})_{j}\in D(B)} \{C\leq \Fn \, : \,  C\cong \oplus_{j\in B}t_{j}I_{j} \}\right]\bigcup \{ \textbf{0}\}$ and so $|S(\Fn)|=\left[ \sum_{B\in T}\sum_{(t_{j})_{j}\in D(B)}\left(  \prod_{j\in B  } \binom{n_{j}I_{j}}{t_{j}I_{j}}_{q}  \right)\right] + 1$ (by Corollary  \ref{bcoef-mult}). 

Note that $|S(H_{j})|=1+ \binom{n_{j}I_{j}}{I_{j}}_{q}+ \binom{n_{j}I_{j}}{2I_{j}}_{q}+\cdots +\binom{n_{j}I_{j}}{n_{j}I_{j}}_{q} =\sum_{t_{j}=1}^{n_{j}}\binom{n_{j}I_{j}}{t_{j}I_{j}}_{q} + 1$ for $j=1,...,r$, and thus $\prod_{j=1}^{r}|S(H_{j})|=\prod_{j=1}^{r}\left( \sum_{t_{j}=1 }^{n_{j}} \binom{n_{j}I_{j}}{t_{j}I_{j}}_{q} + 1 \right)$.

\end{enumerate}

\end{proof}


\subsection{Counting $1$-generator $G$-invariant codes}

Let $G\leq Aut_{\F}(\Fn)$. If $\textbf{0}\neq C\subseteq \Fn$ is a cyclic $\F[G]$-submodule, then it will be said that $C$ is a $1$-\textbf{generator} $G$\textbf{-invariant code}. In this section we use the Gaussian binomial coefficient to count all the $1$-generator $G$-invariant codes in $\Fn$. In \cite{QC-1}, Séguin discussed about the enumeration of 1-generator quasi-cyclic codes when $1=(q,m)=(|q|_{m}, n)$, where $|q|_{m}$ is the order of $q$ module $m$. Later, in \cite{QC-2}, J. Pei and X. Zhang solved that problem without requiring that $(|q|_{m}, n)=1$. However, in both cases, they developed their results using properties of polynomial rings over finite fields. As 1-generator quasi-cyclic codes are $1$-generator $G$-invariant codes, our approach is more general.

\begin{lem}\label{count1}
Let $G\leq Aut_{\F}(\Fn)$ be such that $A=\F[G]$ is semisimple. Let  $\{I_{j}\, : \, j=1,...,r\}$ be a basic set of ideals for $\Fn$. Let  $ n_{j}$ and $k_{j}$  be the  multiplicity  of $I_{j}$ in $\Fn$ and $A$,  respectively,  for $j=1,...,r$. Let   $l_{j}=min\{n_{j}, k_{j}\}$ and $h_{j}= dim_{\F}(I_{j})$ for $j=1,...,r$. Let  $Y=\oplus_{j=1}^{r}l_{j}I_{j}$, $X=\{C\subseteq \Fn \; : \; C \, \text{is\, a \, 1-generator\,  } G-\text{invariant \, code} \}$, and $D(B)=\prod_{j\in B}\{1,..., l_{j}\}$ for all $B\in T:= 2^{\{1,...,r\}}-\{\emptyset\}$. Then,

\begin{enumerate}
\item $C\in X$ if and only if $C\mid Y$ and $C\neq \textbf{0}$.

\item $|X|=\sum_{B\in T}\sum_{(t_{j})_{j}\in D(B)} \left( \prod_{j\in B}\binom{n_{j}I_{j}}{t_{j}I_{j}}_{q}\right)=\left[\prod_{j=1}^{r}\left( \sum_{t_{j}=1}^{l_{j}} \binom{n_{j}I_{j}}{t_{j}I_{j}}_{q} + 1 \right)\right]$ $ -1$.

\item If $l_{j}=1$ for $j=1,...,r$, then $|X|=\sum_{B\in T}\left( \prod_{j\in B}\binom{n_{j}I_{j}}{I_{j}}_{q} \right)=$ \linebreak $\left[\prod_{j=1}^{r} \left( \binom{n_{j}I_{j}}{I_{j}}_{q} + 1 \right)\right] - 1$. Moreover, if $k_{j}=1$ for $j=1,...,r$, then  $|X|=\sum_{B\in T}\left( \prod_{j\in B}\frac{q^{n_{j}h_{j}}-1}{q^{h_{j}}-1}\right)=\left[ \prod_{j=1}^{r} \left( \frac{q^{n_{j}h_{j}}-1}{q^{h_{j}}-1}  + 1 \right) \right] -1 $.

\end{enumerate}  

\end{lem}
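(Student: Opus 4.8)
The plan is to prove the three parts in order, leaning on the structural decomposition of $\Fn$ and on the counting machinery already developed (Corollary \ref{bcoef-mult}, Lemma \ref{same-bino}, Corollary \ref{simp-part-cor}). Write $\Fn \cong \oplus_{j=1}^{r} n_j I_j$ with the $I_j$ a basic set of ideals, and $A = \F[G] \cong \oplus_{j=1}^{r} k_j I_j \oplus (\text{ideals not dividing } \Fn)$; actually for this lemma we only need $k_j$ as the multiplicity of $I_j$ in $A$.

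For part (1), the key observation is: a $G$-invariant code $C$ is $1$-generator, i.e. $C = Ax$ for some $x$, precisely when $C$ is a quotient of the cyclic module $A$, hence (by semisimplicity) $C$ is isomorphic to a direct summand of $A$; so $C \cong \oplus_{j \in B} t_j I_j$ with $t_j \le k_j$. On the other hand $C \le \Fn$ forces $t_j \le n_j$. Thus any $1$-generator $G$-invariant code has $t_j \le l_j := \min\{n_j,k_j\}$ for each relevant $j$, which is exactly the condition $C \mid Y$ with $Y = \oplus_{j=1}^r l_j I_j$ (and $C \neq \mathbf 0$). Conversely, if $C \mid Y$ and $C \neq \mathbf 0$, then $C \cong \oplus_{j \in B} t_j I_j$ with $t_j \le l_j \le k_j$, so $C$ embeds as a direct summand of $A$ and hence is cyclic; and $t_j \le l_j \le n_j$ ensures (via $\binom{M}{N}_q \neq 0 \iff N \mid M$, Lemma \ref{same-bino} part 1) that such a $C$ actually occurs inside $\Fn$. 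I would spell out the "cyclic $\iff$ summand of $A$" equivalence carefully since it is the conceptual heart of the whole lemma; the fact that a nonzero cyclic module over a semisimple ring is a direct summand of the free module of rank one follows from $C \cong A/\mathrm{ann}(x)$ together with Maschke (Theorem \ref{maschke}).

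For part (2), I would count. By part (1), $X$ is the set of nonzero submodules $C \le \Fn$ with $C \mid Y$, i.e. $C \cong \oplus_{j \in B} t_j I_j$ for some nonempty $B \subseteq \{1,\dots,r\}$ and some $(t_j)_{j\in B} \in D(B) = \prod_{j\in B}\{1,\dots,l_j\}$. These isomorphism types are pairwise distinct, and submodules of $\Fn$ of different isomorphism types are distinct submodules, so the count splits as a disjoint union over $(B,(t_j))$. For fixed $(B,(t_j))$ the number of submodules of $\Fn$ isomorphic to $\oplus_{j\in B} t_j I_j$ is $\prod_{j\in B} \binom{n_j I_j}{t_j I_j}_q$ by Corollary \ref{bcoef-mult}. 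Summing gives the first expression; the second (product) expression is the standard combinatorial identity $\sum_{B \subseteq \{1,\dots,r\}, B \neq \emptyset} \prod_{j\in B} a_j = \prod_{j=1}^r (a_j + 1) - 1$ with $a_j = \sum_{t_j=1}^{l_j}\binom{n_j I_j}{t_j I_j}_q$, obtained by expanding the product and reading off which indices contribute a nontrivial factor.

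Part (3) is then just specialization. If $l_j = 1$ for all $j$, then each inner sum $\sum_{t_j=1}^{l_j}$ collapses to the single term $\binom{n_j I_j}{I_j}_q$, giving the first two displayed formulas directly from part (2). If moreover $k_j = 1$ for all $j$, then each $I_j$ has multiplicity $1$ in $A$, so Corollary \ref{simp-part-cor} applies and gives $\binom{n_j I_j}{I_j}_q = \frac{q^{n_j h_j}-1}{q^{h_j}-1}$ with $h_j = \dim_\F(I_j)$; substituting yields the final formulas. I expect the main obstacle to be part (1): one must be careful that "$C$ is cyclic" is genuinely equivalent to "$C$ is isomorphic to a direct summand of the regular module $A$," and in particular that the bound on multiplicities is $\min\{n_j,k_j\}$ rather than just $n_j$ — the $k_j$ bound is exactly what rules out non-$1$-generator codes and is easy to overlook. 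Everything after that is bookkeeping with the already-established Gaussian binomial identities.
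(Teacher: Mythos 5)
Your proposal is correct and follows essentially the same route as the paper: part (1) via ``a nonzero code is cyclic over the semisimple ring $A$ iff it divides $A$'' intersected with the multiplicity bound coming from $\Fn$, part (2) by splitting $X$ into the pairwise disjoint isomorphism classes indexed by $(B,(t_j)_j)$ and applying Corollary \ref{bcoef-mult}, and part (3) by direct specialization using Corollary \ref{simp-part-cor}. The only (harmless) divergence is in the second equality of part (2), where you expand $\prod_{j}(a_j+1)-1$ algebraically instead of using the paper's bijection onto a product of sets of homogeneous-component divisors; your route lands directly on the $\binom{n_jI_j}{t_jI_j}_q$ form stated in the lemma, so nothing is lost.
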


\begin{proof}

\begin{enumerate}

\item $C\in X$ if and only if $C$ is a cyclic $A$-module and $C\neq \textbf{0}$, which occurs if and only if $C \mid A$ (because $A$ is semisimple) and $C\neq \textbf{0}$. Thus $C\in X$ if and only if $C\mid Y$ and $C\neq \textbf{0}$.

\item  By part $1$, $X=\bigcup_{B\in T}\bigcup_{(t_{j})_{j}\in D(B)} \{C\leq \Fn \, : \, C\cong \oplus_{j\in B}t_{j}I_{j} \}$. Thus

\begin{eqnarray*}
|X|&=&\sum_{B\in T}\sum_{(t_{j})_{j}\in D(B)}|\{C\leq \Fn \, : \, C\cong \oplus_{j\in B}t_{j}I_{j} \}|\\
   &=& \sum_{B\in T}\sum_{(t_{j})_{j}\in D(B)}\left(\binom{\Fn}{\oplus_{j\in B}t_{j}I_{j}}_{q}\right) \\
   &=&\sum_{B\in T}\sum_{(t_{j})_{j}\in D(B)}\left(\prod_{j \in B}\binom{n_{j}I_{j}}{t_{j}I_{j}}_{q}\right), 
\end{eqnarray*} 

where the last equality is by Corollary \ref{bcoef-mult}. On the other hand, by part $1$, $X=\{\oplus_{j=1}^{r}U_{j}\, : \, U_{j}\leq \Fn \, \wedge \, U_{j}\mid l_{j}I_{j}\}- \{ \textbf{0}\}$. Let $f:X \rightarrow \prod_{j=1}^{r} \{Z\leq \Fn \, :  \, Z\mid l_{j}I_{j}  \}-\{\textbf{0}\}$ be given by $f(L)=(L_{j})_{j=1}^{r}$, where $L_{j}$ is the homogeneous component of $L$ associated with $I_{j}$ if $I_{j}\mid L$ and $0$ otherwise. By similar arguments to the given in the proof of Lemma \ref{same-bino} (part $5$), $f$ is a bijection. So $|X|=\prod_{j=1}^{r} |\{Z\leq \Fn \, : \, Z\mid l_{j}I_{j}  \}| - 1=\left[\prod_{j=1}^{r}\left( \sum_{t_{j}=1 }^{l_{j}} \binom{l_{j}I_{j}}{t_{j}I_{j}}_{q} + 1 \right)\right] - 1$.

\item  If  $l_{j}=1$ for $j=1,...,r$, then $D(B)=\{(1,...,1)\}$ for all $B\in T$. Hence, by part $2$,
\begin{eqnarray*}
|X|&=&\sum_{B\in T} \left( \prod_{j\in B}\binom{n_{j}I_{j}}{I_{j}}_{q} \right) \\
   &=& \left[\prod_{j=1}^{r}\left( \binom{n_{j}I_{j}}{I_{j}}_{q} + 1 \right)\right] - 1.
\end{eqnarray*}

Moreover, if $k_{j}=1$ for $j=1,...,r$, then $l_{j}=1$ for $j=1,...,r$. So, by part $2$ and Corollary \ref{simp-part-cor},

\begin{eqnarray*}
|X|&=& \sum_{B\in T}\left( \prod_{j\in B}\frac{q^{n_{j}h_{j}}-1}{q^{h_{j}}-1}\right)\\
   &=&\left[\prod_{j=1}^{r}\left( \frac{q^{n_{j}h_{j}}-1}{q^{h_{j}}-1} + 1 \right)\right]-1 .\\ 
\end{eqnarray*}

\end{enumerate}

\end{proof}

\section{Computing sum of $\F[G]$-submodules}\label{ssix}

Let  $M$ be a finite $A$-module. Due that the submodules of $M$ must be direct sums of simple submodules, these can be computed by taking all the possible sums of simple submodules of $M$. However, if the sums of these simple submodules are not carefully done, the amount of work could increase considerably because every submodule of $M$ may be expressed in many  different ways as a direct sum of simple submodules of $M$. The following result provides a partial solution to that problem.

\begin{lem}\label{algo}
Let $I\leq A$ be a minimal ideal. Let $M$ be a finite  $A$-module, such that $I\mid M$. Let $H$ be the homogeneous component of $M$ associated with $I$, and $n$ be the multiplicity of $I$ in $M$. Let $\{N_{i}\}_{i\in J}$ with $J=\{1,...,r\}$ be the collection of all $A$-submodules of $M$ isomorphic to $I$, and $\dbinom{J}{k}$ the collection of subsets of $J$ with $k$ elements.

Let $(F, Z, X)$ be given as output of  \textbf{Algorithm} \ref{algo1}. Then the following hold:
 
\begin{enumerate}

\item $F$ contains all $A$-submodules of  $H$.

\item If  $y_{0}, y_{1} \in Z $, then $\sum_{j\in y_{0}} N_{j}\neq \sum_{j\in y_{1}} N_{j}$.

\item  For all $y\in Z$, $\sum_{j\in y} N_{j}=\oplus_{j\in y} N_{j}$.

\end{enumerate}

\begin{algorithm}
  \caption{Sum of simple $A$-modules }\label{algo1}
   \begin{algorithmic}[1] 
    \Function{Sumofsimp}{$\{N_{1},...,N_{r}\},\ n$}

      \State $X=\emptyset$; $J=\{1,...,r\}$ \Comment{Where $\emptyset $ is the empty set.}
      
        \State $Z=\{\{1\},\{2\},...,\{r\}\}$
        \State $F=\{N_{1},...,N_{r}, \textbf{0}\}$
        \For{$k = 2$ to ${n}$}
               
          \For { $y\in \dbinom{J}{k}$ }
            \State $R_{y}=\emptyset$ 
            \If{not $y\in X$}
            
            \State add $y$ to $Z$\label{n1} 
            \State  add $\sum_{j\in y} N_{j}$ to $F$
            \State $count=0$

                 \For {$t\in J-y$} \label{n2}
                  \If{$count <\binom{kI}{I}_{q}-|y|$}\label{n4}\Comment{This conditional can be ignored for the proof of Lemma \ref{algo}. Nevertheless, in practice, its omission would cause the algorithm to run more slowly.}
                    \If{$N_{t}\subset \sum_{j\in y} N_{j}$}
                      \State add $t$ to $R_{y}$
                      \State $count=count+1$
                    \EndIf
                  \Else
                    \State $\textbf{break}$  
                  \EndIf  
                 \EndFor

                 \For{$j=k$ to $ min\{|R_{y}\cup y|, n\}$}  \label{n5}
                   \For {$u \in \dbinom{R_{y}\cup y}{j}-Z $}\label{n6}
                         \State  add $u$ to $X$\label{n7}
                   \EndFor
                 \EndFor 
             \EndIf     
               
             \EndFor
           \EndFor
             \State return $(F,Z, X)$
         \EndFunction
\end{algorithmic}
\end{algorithm}

\end{lem}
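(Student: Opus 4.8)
The plan is to analyze Algorithm \ref{algo1} by tracking exactly which subsets of $J$ land in $Z$ and which land in $X$, and to show that $Z$ is a transversal of the equivalence relation ``generates the same $A$-submodule of $H$'' restricted to independent subsets of $\{N_1,\dots,N_r\}$ of size at most $n$. Recall from Lemma \ref{ci-simp} (applicable since we may work inside the homogeneous component, where every $N_i\cong I$) that any nonzero cyclic submodule of $H$ is simple, so a sum $\sum_{j\in y}N_j$ is a direct sum precisely when the $N_j$, $j\in y$, are independent; and by Lemma \ref{basisp} an independent family of size $k$ is a basis for the $kI$ it generates. The three claims should be proved together by induction on the outer loop index $k$, maintaining the loop invariant: after the iteration for a given $k$, (a) $Z$ consists of one representative subset for each $A$-submodule of $H$ of the form $\oplus_{j\in y}N_j$ with $|y|\le k$; (b) distinct elements of $Z$ generate distinct submodules; (c) every $y\in Z$ is independent; and (d) $X$ contains every subset $u\subseteq J$ with $|u|\le k$ that is \emph{not} independent-of-minimal-size, i.e. every $u$ such that $\oplus$ fails or such that some strictly smaller already-recorded subset generates the same module.

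For the base case $k=1$: every singleton is trivially independent and generates a distinct $N_i$ (distinct by hypothesis, since the $N_i$ are listed as the distinct submodules isomorphic to $I$), $Z=\{\{1\},\dots,\{r\}\}$, and $X=\emptyset$; claims (a)--(c) hold for submodules of ``size $1$'' (the simple ones plus $\mathbf 0$, which is put into $F$ directly), and there are no non-minimal subsets of size $\le 1$. For the inductive step at level $k$, consider a $k$-element subset $y$. If $y\in X$ already, then by the invariant $y$ is either dependent or generates a module already recorded by a smaller witness, so it is correct to skip it. If $y\notin X$, I claim $y$ is independent and $\oplus_{j\in y}N_j$ is a genuinely new submodule. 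Independence: if $y$ were dependent, then some simple $N_j$ with $j\in y$ would lie in $\sum_{i\in y\setminus\{j\}}N_i$; that smaller sum has size $k-1$, so by the inductive invariant its minimal generating witness $y'\subseteq J$ (with $|y'|\le k-1$) is in $Z$, and the block $\binom{R_{y'}\cup y'}{\,\cdot\,}$ processed when $y'$ was added to $Z$ would have inserted all subsets of $R_{y'}\cup y'$ of size $\ge |y'|$ — in particular $y$, provided $y\subseteq R_{y'}\cup y'$. Here $R_{y'}$ is (a sufficiently large part of) the set of indices $t$ with $N_t\subseteq \oplus_{i\in y'}N_i$, and since $\oplus_{j\in y}N_j=\oplus_{i\in y'}N_i$ forces every $N_j$, $j\in y$, into that module, indeed $y\subseteq R_{y'}\cup y'$ — so $y\in X$, a contradiction. (This is where the bound $count<\binom{kI}{I}_q-|y|$ must be checked not to truncate $R_{y}$ below what is needed; by Corollary \ref{simp-part-cor} / Lemma \ref{bcoef} that count is exactly the number of simple submodules of $\oplus_{i\in y'}N_i$, so the only indices we may legitimately drop are redundant ones — this is the subtle point the footnote in the algorithm alludes to, and I expect it to be the main obstacle to a fully rigorous argument.) Newness of the submodule: if $\oplus_{j\in y}N_j=\sum_{j\in z}N_j$ for some $z\in Z$ with $|z|<k$, the same $R_z\cup z$ block would have placed $y$ in $X$; and if $|z|=k$ with $z$ processed earlier in this same level, again $y$ would have been thrown into $X$ by $z$'s block (since $\binom{R_z\cup z}{k}-Z$ captures $y$). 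Hence adding $y$ to $Z$ and $\oplus_{j\in y}N_j$ to $F$ preserves (a)--(c) at level $k$, and the inner $X$-update (inserting all size-$\ge k$ subsets of $R_y\cup y$) restores (d) for the newly-reachable non-minimal subsets.

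Finally I would assemble the three conclusions from the invariant at $k=n$: claim \textbf{1} (every $A$-submodule of $H$ is in $F$) follows because every submodule of $H$ is semisimple (Maschke, Theorem \ref{maschke}) hence $\cong mI$ for some $m\le n$, so it has a basis of $m$ simple submodules $\{N_j : j\in y\}$ with $|y|=m\le n$ (Lemma \ref{basisp}), and by (a)--(b) its minimal witness is in $Z$ and its sum was added to $F$ (together with $\mathbf 0$, added at initialization); claim \textbf{2} is exactly invariant (b); claim \textbf{3} is exactly invariant (c) via Lemma \ref{ci-simp}. The only genuinely delicate step is verifying that the optional $count$-cutoff in line \ref{n4} never removes from $R_y$ an index needed to witness membership $y\subseteq R_{y'}\cup y'$ in the independence argument — and indeed, since what we need is only that $R_{y'}\cup y'$ contain every index whose simple module sits inside $\oplus_{i\in y'}N_i$ \emph{when $y'$ is itself a minimal witness of size $<k$}, and such a set is never truncated below $\binom{|y'|I}{I}_q$ simples, the $y$ we care about (whose blocks are all contained in that module) is always caught; I will make this counting bookkeeping explicit using Corollary \ref{simp-part-cor}.
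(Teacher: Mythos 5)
Your proposal is essentially the paper's own argument, reorganized as a single loop-invariant induction on the outer index $k$: the operative mechanisms — when a set enters $Z$, the block built from $R_{y}\cup y$ flags every dependent or duplicate superset into $X$ (lines \ref{n5}--\ref{n7}), earlier/smaller witnesses in $Z$ are used to contradict membership of bad sets in $Z$, and multiplicity counts upgrade the containments $\sum_{j\in y}N_{j}\subseteq\sum_{i\in z}N_{i}$ to equalities — are exactly those in the paper's separate proofs of parts 1--3 (the paper simply ignores the $count$ cutoff, as its comment licenses, whereas you verify it is harmless), so the outline is correct. One caveat: your aside that Lemma \ref{ci-simp} makes every nonzero cyclic submodule of $H$ simple is false when $I$ has multiplicity greater than $1$ in $A$ (cf. $H_{2}$ in Example \ref{ex-nocom}), but nothing in your argument actually relies on it, since independence of $\{N_{j}\}_{j\in y}$ is by definition the directness of the sum.
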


\begin{proof}
Let $E:=\bigcup_{t=1}^{n}\binom{J}{t}$. Observe that $E=X\sqcup Z$.

\begin{enumerate}

\item Proceeding by induction on the multiplicity $l$ of $I$ in the $A$-submodules of $H$, it is easy to see that the statement holds for $l=2$, i.e., $F$ contains all $A$-submodules of $H$ that are the sum of two simple $A$-modules.\\
 Suppose that the same is true for $l\lneq k$ with $k$ a positive integer, i.e., $F$ contains all $A$-submodules of $H$ that are the direct sum of $l$ simple $A$-modules with $l\lneq k$.  Let $y\in E$ with $|y|=k$. If $y\in Z$, it is clear that $\sum_{j\in y}N_{j}\in F$. Otherwise, if $y\in X$, then there exists $z\in Z$ with $|z|\leq k$ such that $\sum_{j\in y}N_{j}\subseteq \sum_{i\in z}N_{i}\in F$ (by construction of $X$, Algorithm \ref{algo1}, lines \ref{n5}-\ref{n7}). If $\sum_{j\in y}N_{j}= \sum_{i\in z}N_{i}$, the proof ends. On the other hand, if $\sum_{j\in y}N_{j}\subsetneq \sum_{i\in z}N_{i}$, the multiplicity of $I$ in $\sum_{j\in y}N_{j}$ must be less  than the multiplicity of $I$ in $\sum_{i\in z}N_{i}$, which is at most $|z|\leq k $, and by inductive hypothesis, $\sum_{j\in y}N_{j}$ belongs to $F$.
 
\item Suppose that there exist $y, y' \in Z$ with $y\neq y'$ and $\sum_{i \in y}N_{i}= \sum_{j \in y'}N_{j}$. Without loss of generality, $y$ was added before than $y'$ to $Z$ in line \ref{n1} of Algorithm \ref{algo1}. Then $\sum_{i \in y}N_{i}= \sum_{j \in y'}N_{j}$ implies $N_{j}\subseteq \sum_{i \in y}N_{i}$ for all $j\in y'$. If $R_{y}=\{j \in J-y \, : \, N_{j}\subset \oplus_{i\in y}N_{i}\}$ (this  is the set obtained after executing the loop in line \ref{n2}  of Algorithm \ref{algo1}), then $y'\subseteq R_{y}\cup y $. In addition, $|y'|\in \left[ |y|,  min\{|R_{y}\cup y|, n\} \right]$. However, in the step in which $y$ was added to $Z$, $y'$ did not belong to $Z$ (because $y$ was added before than $y'$ to $Z$). Thus, $y'$ was added to $X$ (by construction of $X$, Algorithm \ref{algo1}, lines \ref{n5}-\ref{n7}) which contradicts that $y'\in Z$.

\item Suppose that there exists $y\in Z$ such that $\sum_{i\in y}N_{i}$ is not a direct sum. Then there exists $l\in E$ with $|l|\lneq|y|$ and $\oplus_{j\in l}N_{j}=\sum_{i\in y}N_{i}$.  If $l\in Z$, then $l$ was added first than $y$ to $Z$ in line \ref{n1} of Algorithm \ref{algo1} (because $|l|\lneq |y|$). If $R_{l}=\{i\in J-l \, : \, N_{i}\subset \oplus_{j\in l}N_{j}\}$ (this  is the set obtained after executing the loop in line \ref{n2}  of Algorithm \ref{algo1}), then $y\subseteq R_{l}\cup l $. In addition, $|y|\in \left[ |l|,  min\{|R_{l}\cup l|, n\} \right]$, but in the step in which $l$ was added to $Z$, $y$ did not belong to $Z$ (because $|l|\lneq |y|$). Thus $y$ was added to $X$ (by construction of $X$, Algorithm \ref{algo1}, lines \ref{n5}-\ref{n7}) which contradicts that $y\in Z$. If $l \in X$, then  there exists $y'\in Z$ with $|y'|\leq|l|$ such that $\oplus_{j\in l}N_{j}\subseteq \sum_{j\in y'}N_{j}$ (by construction of $X$, Algorithm \ref{algo1}, lines \ref{n5}-\ref{n7}). So $\sum_{i\in y}N_{i}=\oplus_{j\in l}N_{j}=\oplus_{j\in y'}N_{j}$, but $y\neq y'$ (because $|y'|\leq|l|\lneq|y|$) and $y,y'\in Z $, which contradicts part $2$.

\end{enumerate}

\end{proof}


\section{Computing the $G$-invariant codes of $\F^n$}\label{sseven}

In this section we provide a method to compute all the $\F[G]$-submodules ($G$-invariant codes) of $\Fn$ for a given $G\leq Aut_{\F}(\Fn)$, when $(|G|,q)=1$. In this case, Theorem \ref{maschke} guarantees that the $\F[G]$-submodules of $\Fn$ are direct sums of simple $\F[G]$-submodules. As the simple submodules of  $\Fn$ are contained in their homogeneous components, we will start by giving ways of computing these last.

\subsection{Additional results for the solution of the invariance problem}

In this subsection, results that help to determine the $G$-invariant codes of $\F^n$ are presented, starting with  Theorem \ref{divmin}, which is a particular case of  \cite[Theorem 54.12]{rep2}.

\begin{teo}[Minimal divisor]\label{divmin}
Let $e\in A$ be a primitive idempotent,  $I=Ae$ be the minimal ideal generated by $e$, and $M$ be a finite $A$-module. Then

\begin{center}
$I\mid M$ if and only if $eM\neq 0$.
\end{center}
\end{teo}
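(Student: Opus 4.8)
The plan is to prove the equivalence $I \mid M \iff eM \neq 0$ by combining Lemma \ref{iso-ev} with Schur's lemma. Recall that $I = Ae$ is a \emph{minimal} ideal, hence a simple $A$-module, since $e$ is a primitive idempotent and $A$ is semisimple.

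First I would prove the forward direction. Suppose $I \mid M$, so there is an $A$-submodule $U \leq M$ with $U \cong I$. Composing this isomorphism with the inclusion $U \hookrightarrow M$ gives a nonzero $A$-module homomorphism $\varphi : Ae \to M$. By Lemma \ref{iso-ev}, the evaluation map $\eta_e : \mathrm{Hom}_A(Ae, M) \to eM$, $\varphi \mapsto \varphi(e)$, is an isomorphism of $\F$-vector spaces; in particular it is injective, so $\varphi \neq 0$ forces $\varphi(e) \neq 0$, and $\varphi(e) \in eM$. Therefore $eM \neq 0$.

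For the converse, suppose $eM \neq 0$, say $em \neq 0$ for some $m \in M$. Since $\eta_e$ is surjective (again Lemma \ref{iso-ev}), there is a homomorphism $\varphi : Ae \to M$ with $\varphi(e) = em \neq 0$, so $\varphi$ is a nonzero morphism out of the simple module $I = Ae$. Its image $\varphi(I) \leq M$ is a nonzero quotient of $I$; by simplicity of $I$ the kernel of $\varphi$ is $0$, so $\varphi$ is injective and $\varphi(I) \cong I$ is a submodule of $M$ isomorphic to $I$. Since $A$ is semisimple and $M$ is finite, $M$ decomposes as a direct sum of simple submodules and $\varphi(I)$ splits off as a direct summand, giving $M \cong I \oplus U$ for some $A$-module $U$; hence $I \mid M$. (Alternatively, one can invoke the definition of $\mid$ directly once $\varphi(I) \cong I$ is realized as a submodule, using semisimplicity to produce the complement $U$.)

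I do not expect a real obstacle here: the statement is essentially a repackaging of Lemma \ref{iso-ev} together with the fact that a nonzero morphism from a simple module is injective (Schur-type reasoning) and that submodules of a finite semisimple module are direct summands. The one point to be careful about is that $\mid$ is defined up to isomorphism of complements, so in the converse I should explicitly invoke semisimplicity of $A$ to split $\varphi(I)$ off as a summand rather than merely exhibiting it as a submodule; everything else is routine.
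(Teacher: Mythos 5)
Your proof is correct. Note, however, that the paper does not actually prove Theorem \ref{divmin}: it simply quotes it as a particular case of a standard textbook result (Theorem 54.12 of Curtis--Reiner), so there is no internal argument to match yours against. What you have written is essentially the standard proof of that textbook fact, but assembled entirely from ingredients already in the paper: the forward direction is the injectivity of the evaluation isomorphism $\eta_e$ of Lemma \ref{iso-ev} (together with the observation $\varphi(e)=e\varphi(e)\in eM$), and the converse uses its surjectivity to produce a nonzero map $\varphi:Ae\to M$, simplicity of $Ae$ to make $\varphi$ injective, and semisimplicity of $A$ (Maschke, the paper's standing assumption on $A$) to split $\varphi(Ae)$ off as a direct summand, which is exactly what the definition of $I\mid M$ requires. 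Your care about the last point is well placed, since divisibility in the paper is defined via an isomorphism $M\cong I\oplus U$ and not merely via the existence of a submodule isomorphic to $I$; the only cosmetic remark is that you do not need Schur's lemma in full, since simplicity of the domain alone forces the kernel of a nonzero $\varphi$ to vanish. The net effect is that your version makes the paper self-contained at this point, at the cost of a short argument the authors preferred to outsource to the literature.
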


\begin{lem}\label{com-hom}
Let $X$ be a basic set of idempotents for a finite $A$-module $M$. Let $e\in X$ be a primitive idempotent such that  $I=Ae\mid M$. Let $H$ be the homogeneous component associated with $I$ and $n$ be the multiplicity of $I$ in $H$. Then the following hold:

\begin{enumerate}
\item $H\cong nI$ as $A$-modules.

\item $e M \cong n(eAe)$ as $\F$-vector spaces.

\item If $e$ is central, then $eM=H$. 
\end{enumerate} 
\end{lem}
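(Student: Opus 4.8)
The plan is to establish the three claims in order, using the structural results already available. For part 1, recall that $H$ is by definition the sum of all simple $A$-submodules of $M$ isomorphic to $I$. Since $M$ is a finite module over the semisimple ring $A$, it decomposes as a direct sum of simple submodules; by Maschke's theorem and the structure of semisimple modules, the isotypic component corresponding to $I$ is exactly $H$, and $H$ is itself semisimple with all simple summands isomorphic to $I$. Hence $H \cong nI$ where $n$ is the multiplicity of $I$ in $H$ (which, by definition of $H$, is the same as the multiplicity of $I$ in $M$). The only point to be careful about is that the multiplicity is well-defined; this follows from the Krull--Schmidt theorem (cited in the proof of Lemma \ref{basisp}).

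For part 2, I would use Lemma \ref{iso-ev}: since $e$ is a primitive idempotent with $I = Ae$, the evaluation map $\eta_e \colon \mathrm{Hom}_A(Ae, M) \to eM$, $\varphi \mapsto \varphi(e)$, is an isomorphism of $\F$-vector spaces. Now apply part 1 to replace $M$-relevant data: the simple submodules of $M$ not isomorphic to $I$ contribute nothing to $\mathrm{Hom}_A(Ae, -)$ by Schur's lemma (Lemma \ref{schur}), so $\mathrm{Hom}_A(Ae, M) \cong \mathrm{Hom}_A(Ae, H) \cong \mathrm{Hom}_A(Ae, nI) \cong n\,\mathrm{Hom}_A(Ae, I)$ as $\F$-vector spaces. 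Finally, applying Lemma \ref{iso-ev} again with $M = I = Ae$ gives $\mathrm{Hom}_A(Ae, Ae) \cong eAe$. Chaining these isomorphisms yields $eM \cong n(eAe)$ as $\F$-vector spaces.

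For part 3, suppose $e$ is central. Then $eM$ is an $A$-submodule of $M$ (since for $a \in A$ and $m \in M$, $a(em) = (ae)m = (ea)m = e(am) \in eM$). I claim $eM = H$. For the inclusion $H \subseteq eM$: if $U \leq M$ is simple with $U \cong I = Ae$, then $eU \neq 0$ by Theorem \ref{divmin} (applied to the module $U$), and since $U$ is simple and $eU$ is a nonzero submodule of $U$, we get $eU = U$, so $U = eU \subseteq eM$; summing over all such $U$ gives $H \subseteq eM$. For the reverse inclusion $eM \subseteq H$: write $M = \bigoplus_\alpha S_\alpha$ as a direct sum of simple submodules. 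For each $\alpha$, either $S_\alpha \cong I$, in which case (as just argued) $eS_\alpha = S_\alpha \subseteq H$, or $S_\alpha \not\cong I$; in the latter case $eS_\alpha$ is an $A$-submodule of $S_\alpha$, hence $0$ or $S_\alpha$, and if it were $S_\alpha$ then $eS_\alpha \neq 0$ would force $I \mid S_\alpha$ by Theorem \ref{divmin}, i.e. $S_\alpha \cong I$ since $S_\alpha$ is simple, a contradiction; so $eS_\alpha = 0$. Since $e$ acts coordinatewise on the direct sum, $eM = \bigoplus_\alpha eS_\alpha = \bigoplus_{S_\alpha \cong I} S_\alpha \subseteq H$. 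Combining both inclusions gives $eM = H$.

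I expect part 3 to be the only place requiring genuine care — specifically, the verification that $e$ central makes $eM$ a submodule and the decomposition argument showing $e$ annihilates the non-$I$-isotypic part. Parts 1 and 2 are essentially bookkeeping with Schur's lemma, Lemma \ref{iso-ev}, and the semisimple structure theory, so I would keep those brief.
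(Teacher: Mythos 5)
Your proof is correct, and parts 1 and 2 follow the paper's argument almost verbatim (evaluation isomorphism of Lemma \ref{iso-ev}, additivity of $Hom_{A}(Ae,\,\cdot\,)$, and Schur's lemma to discard the non-$I$ isotypic part). The only genuine divergence is in part 3. The paper proves the single inclusion $eM\leq H$ by noting that $f(eM)=0$ for every other idempotent $f\in X$ (orthogonality of the basic set) so that, by Theorem \ref{divmin}, $I$ is the only simple divisor of $eM$, and then upgrades the inclusion to equality by an $\F$-dimension count: by part 2 and centrality, $eM\cong n(eAe)=n(Ae)\cong H$ as vector spaces. You instead argue both inclusions directly from a decomposition $M=\oplus_{\alpha}S_{\alpha}$ into simples: $e$ acts nontrivially, hence surjectively, on each simple isomorphic to $I$ (giving $H\subseteq eM$) and annihilates every other simple summand by Theorem \ref{divmin} (giving $eM\subseteq H$). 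Your route is self-contained in the sense that it uses neither part 2 nor the basic set $X$ at all (so that hypothesis of the lemma is not actually needed for your argument), at the cost of invoking an explicit semisimple decomposition of $M$; the paper's version is slightly shorter because it recycles the dimension computation already done in part 2. Both arguments ultimately rest on the same two tools, Theorem \ref{divmin} and Schur's lemma, and both are valid.
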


\begin{proof}

\begin{enumerate}

\item It is clear.

\item  Note that $eM\cong Hom_{A}(Ae,M)\cong n Hom_{A}(Ae,Ae)=n End_{A}(Ae)\cong n(eAe)$ as $\F$-vector spaces, where the first and last isomorphisms are by Lemma \ref{iso-ev}, and the second isomorphism is due to additivity of  $Hom_{A}(Ae, \;)$ respect to direct sums and Lemma \ref{schur}. Therefore, $eM\cong n(eAe)$ as $\F$-vector spaces.

\item If $e$ is central, $eM$ is clearly an $A$-submodule of $M$ such that for all $f\in X-\{e\}$, $f(eM)= 0$, because the idempotents in $X$ are orthogonal. This implies that the unique  simple divisor (up to isomorphism) of $eM$ is $I$ (by Theorem \ref{divmin}). Thus $eM\leq H$. On the other hand,  by Lemma \ref{iso-ev},  $eM\cong Hom_{A}(Ae,M)\cong n(eAe)=nI\cong H$ as $\F$-vector space, where the last isomorphism  is by part $1$, and thus $eM=H$.
\end{enumerate}

\end{proof}

\begin{cor}\label{divmin1}
Let  $M$, $H$, $e$, and $n$  be as in Lemma \ref{com-hom}. Then, $n=dim_{\F}(H)/dim_{\F}(Ae)=dim_{\F}(e M)/dim_{\F}(eAe)$.
\end{cor}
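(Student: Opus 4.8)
The plan is to read off both equalities directly from Lemma \ref{com-hom} by comparing $\F$-dimensions, using that dimension is additive over direct sums of $\F$-vector spaces.

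First I would establish the left equality $n = \dim_{\F}(H)/\dim_{\F}(Ae)$. By part~1 of Lemma \ref{com-hom}, $H \cong nI = I \oplus \cdots \oplus I$ ($n$ times) as $A$-modules, hence also as $\F$-vector spaces; since $I = Ae$, taking $\F$-dimensions gives $\dim_{\F}(H) = n\cdot \dim_{\F}(Ae)$. As $Ae \neq 0$ (it is a minimal ideal, so $\dim_{\F}(Ae)\geq 1$), we may divide to obtain $n = \dim_{\F}(H)/\dim_{\F}(Ae)$.

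Next I would establish the right equality $n = \dim_{\F}(eM)/\dim_{\F}(eAe)$. By part~2 of Lemma \ref{com-hom}, $eM \cong n(eAe)$ as $\F$-vector spaces, so $\dim_{\F}(eM) = n\cdot \dim_{\F}(eAe)$. Here $eAe \neq 0$ because it contains $e$, which is a nonzero idempotent (a primitive idempotent is nonzero by definition, since $0 = 0+0$ with both summands orthogonal idempotents would force the ``primitive'' condition to fail on the zero element; more simply, $Ae = I \neq 0$ forces $e \neq 0$), so $\dim_{\F}(eAe) \geq 1$ and we may divide to get $n = \dim_{\F}(eM)/\dim_{\F}(eAe)$. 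Combining the two displays yields the stated chain of equalities.

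There is essentially no obstacle here: the corollary is a bookkeeping consequence of the two isomorphisms already proved in Lemma \ref{com-hom}, and the only thing to be careful about is noting that the denominators $\dim_{\F}(Ae)$ and $\dim_{\F}(eAe)$ are nonzero so that division is legitimate. I would keep the proof to two or three lines, citing parts~1 and~2 of Lemma \ref{com-hom} and the additivity of $\dim_{\F}$ over direct sums.
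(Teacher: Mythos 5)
Your proof is correct and matches the paper's intent exactly: the paper states this corollary without a separate proof precisely because it is the immediate dimension count from parts 1 and 2 of Lemma \ref{com-hom}, which is what you do. The only addition you make is the (correct, if slightly laboured) check that $\dim_{\F}(Ae)$ and $\dim_{\F}(eAe)$ are nonzero, which is harmless.
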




There are occasions in which one have a generator element of a minimal ideal that  is not an idempotent, and want to determine if the ideal generated by this element divides a given module.   Corollary \ref{divmin1.5} presents a solution to this problem. 

\begin{cor}\label{divmin1.5}
Let $e\in A$ be a primitive idempotent, $M$ be  a finite $A$-module, and $0\neq f\in I=Ae$ be such that $I\mid M$. Then,

\begin{enumerate}
\item $eM\neq 0$ if and only if $fM\neq 0$.

\item If $e$ and $f$ are central, $fM$ is the homogeneous component of $M$ associated with $I$ and the multiplicity of $I$ in $M$ is $n=dim_{\F}(fM)/dim_{\F}(Af)$.

\end{enumerate}
\end{cor}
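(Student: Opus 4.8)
The plan is to prove each of the two parts of Corollary \ref{divmin1.5} by reducing to results already established, chiefly Theorem \ref{divmin}, Lemma \ref{com-hom}, and Corollary \ref{divmin1}.

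\textbf{Part 1.} The key observation is that $f \in I = Ae$, and since $e$ is a primitive idempotent generating the minimal (hence simple) ideal $I$, we have $ef = f$: indeed $f = ae$ for some $a \in A$, but $e$ acts as a right identity on $Ae$ (more carefully, $fe = f$ because $f \in Ae$ and $e^2 = e$; and if $f$ is not assumed central one must instead argue $Af = I$ since $Af$ is a nonzero submodule of the simple module $I$). The cleanest route: since $I$ is simple and $0 \neq f \in I$, we have $Af = I$, so $f$ and $e$ generate the same ideal. Now suppose $fM \neq 0$. Then $FM \neq 0$ forces $IM = AfM \neq 0$, and since $I \mid M$ is given, $I M \neq 0$ is automatic — so I should instead chase the equivalence through Theorem \ref{divmin} directly. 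First, $fM \neq 0$ implies $AfM = IM \neq 0$, and since $IM \subseteq \sum_{U \cong I} U = H$ with $H \cong nI$ (Lemma \ref{com-hom}), and $eH \neq 0$ because $e$ acts as the identity on each copy of $I = Ae$ inside $H$ (as $e(xe) = xe$ for $x\in A$), we get $eM \supseteq eH \neq 0$. Conversely, if $eM \neq 0$ then $I \mid M$ by Theorem \ref{divmin} (or it is given), so $H \neq 0$; writing $H \cong nI$ and noting $f$ generates $I$, we have $fH \neq 0$ (the image of $f$ in a copy of $I$ under an isomorphism $I \to I$ is nonzero), hence $fM \neq 0$. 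I expect the mild friction here to be bookkeeping about left versus right multiplication by $e$ on $Ae$; the identity $xe \cdot e = xe$ handles it.

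\textbf{Part 2.} Assume $e$ and $f$ are central. Then $fM$ is an $A$-submodule of $M$ (centrality makes $A(fm) = f(Am) \subseteq fM$). Since $f \in I$ and $Af = I$, the central idempotent associated to $I$ is $e$, and $fM \subseteq eM = H$ by Lemma \ref{com-hom} part 3. For the reverse inclusion, note that $f$ acts invertibly on $H$: on each simple summand $\cong I$, multiplication by the nonzero element $f$ of the (commutative, since $e$ central makes $eAe$ a field by Schur) division ring $eAe \cong I$ is a bijection, so $fH = H$; therefore $H = fH \subseteq fM \subseteq H$, giving $fM = H$. Thus $fM$ is the homogeneous component associated with $I$. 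Finally, applying Corollary \ref{divmin1} with $H = fM$ gives $n = \dim_{\F}(fM)/\dim_{\F}(Ae)$; and since $Af = I = Ae$ we may rewrite the denominator as $\dim_{\F}(Af)$, yielding $n = \dim_{\F}(fM)/\dim_{\F}(Af)$.

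The only genuine subtlety — the "hard part," though it is modest — is justifying that multiplication by $f$ is a bijection on the homogeneous component $H$. This rests on Schur's lemma (Lemma \ref{schur}): $\operatorname{End}_A(I) \cong eAe$ is a division ring, finite hence a field, in which $f$ (viewed via the isomorphism of Lemma \ref{iso-ev}, or directly as $ef = f \neq 0$) is a nonzero element and therefore a unit; its action on each copy of $I$ in $H \cong nI$ is invertible, so its action on $H$ is invertible. Everything else is a direct assembly of the cited results, so I would keep the write-up short, citing Theorem \ref{divmin}, Lemma \ref{com-hom}, Corollary \ref{divmin1}, and Lemma \ref{schur} at the appropriate points.
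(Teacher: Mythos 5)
Your proof is correct, but it is noticeably heavier than the paper's, which runs on two element-level identities. For part 1 the paper simply writes $f=be$ for some $b\in A$ (so $eM=0$ gives $fM=b(eM)=0$), and, since $I=Ae$ is simple and $f\neq 0$, $Af=Ae$, so $e=cf$ and the converse follows symmetrically; no appeal to Theorem \ref{divmin}, to the homogeneous component, or even to the hypothesis $I\mid M$ is needed. Your detour through $eH\neq 0$ and $fH\neq 0$ is valid, though note that without centrality $e$ does \emph{not} act as the identity on a copy of $I$ — what you actually need, and what $e\cdot e=e$ and $fe=f$ provide, is merely that $e$ and $f$ act nonzero on any submodule isomorphic to $Ae$ — and your argument in effect proves each side of the equivalence outright from $I\mid M$ rather than deducing one from the other. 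For part 2 the paper again uses centrality and mutual generation to assert $eM=fM$ and $eAe=fAf=Af$ directly, and then cites Lemma \ref{com-hom} (part 3) and Corollary \ref{divmin1} exactly as you do. Your alternative — $f$ is a unit of the finite division ring $eAe$ (Schur's lemma plus finiteness), hence acts bijectively on $H$, so $H=fH\subseteq fM\subseteq eM=H$ — is a sound and somewhat more conceptual reformulation of the same underlying fact ($e\in Af$), at the cost of invoking Lemma \ref{schur} and Wedderburn-type finiteness where the paper needs only the equations $f=be$ and $e=cf$. Both routes land on the same identifications $fM=H$ and $n=\dim_{\F}(fM)/\dim_{\F}(Af)$, so the difference is one of economy rather than substance.
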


\begin{proof}
\begin{enumerate}
\item If $eM=0$, then there exists $b\in A$ such that $fM=(be)M=b(eM)=0$. Similarly, if $fM=0$, then $eM=0$.

\item Let $e$ and $f$ be central, then $eM=fM$ and $eAe=fAf=Af$. So, by Corollary \ref{divmin1}, $n=dim_{\F}(eM)/dim_{\F}(eAe)=dim_{\F}(fM)/dim_{\F}(Af)$ is the multiplicity of $I$ in $M$. Besides, if $H$ is the homogeneous component associated with $I$, $H=eM=fM$ (by Lemma \ref{com-hom} part $3$).
\end{enumerate}
\end{proof}


\begin{lem}\label{mcicli}
Let  $A=\F [G]$, $M$ be  an $A$-module, and $m\in M$. Then $A\cdot m=\langle O(m) \rangle_{\F}$, where $O(m)$ is the orbit of $m$ under the multiplication by elements of $G$.
\end{lem}





Lemma \ref{com-hom} (part $3$) offers an easy way to compute the homogeneous component associated with a minimal ideal generated by a central primitive idempotent. Thereby, it is natural to ask over what happens when this idempotent is not central. Theorem \ref{divmin3} will englobe both cases. However, when possible, it is recommended to use Corollary \ref{com-hom} instead of this theorem because it is easier to be applied.

\begin{teo}\label{divmin3}
Let  $M$ be a finite $A$-module. Let $e\in A$ be a primitive  idempotent such that  $I=Ae\mid M$.  Let $H$ be  the homogeneous component of $M$ associated with $I$, and $n$ be the multiplicity of $I$ in $M$. If $\beta=\{\beta_{1}, \beta_{2},...,\beta_{r}\}$ is a generating set of $M$ as  $\F$-vector space, then $\sum_{j=1}^{r} A(e\beta_{i}) =H$.
\end{teo}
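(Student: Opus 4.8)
\textbf{Proof proposal for Theorem \ref{divmin3}.}

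The plan is to show the two inclusions separately, with the harder direction being that $H$ is contained in $\sum_{j=1}^{r} A(e\beta_j)$. For the inclusion $\sum_{j=1}^{r} A(e\beta_j) \subseteq H$, first I would observe that for any $m \in M$, the submodule $A(em)$ is a homomorphic image of $Ae$: the map $Ae \to A(em)$ sending $ae \mapsto aem$ is a well-defined $A$-module surjection (it is the composition of left multiplication by $m$ with the inclusion, restricted appropriately — more precisely, $a \mapsto aem$ factors through $Ae$ since $e$ is idempotent, $ae = a e \cdot e$). Hence $A(em)$ is either $0$ or, by Schur's lemma (Lemma \ref{schur}) together with simplicity of $Ae = I$, isomorphic to $I$. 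In either case every simple submodule of $A(e\beta_j)$ is isomorphic to $I$, so $A(e\beta_j) \subseteq H$ by definition of the homogeneous component, and summing over $j$ gives the first inclusion.

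For the reverse inclusion, let $W := \sum_{j=1}^{r} A(e\beta_j)$; I want $H \subseteq W$. The key point is that $W = eAM$ in disguise: since $\beta = \{\beta_1,\dots,\beta_r\}$ spans $M$ over $\F$ and $A$ acts $\F$-linearly, we have $AeM = \sum_{j} Ae\beta_j = W$ (here I use that $eM = \{em : m \in M\}$ and $AeM$ denotes the $A$-submodule generated by it; expanding $m$ in the spanning set $\beta$ and using $A$-linearity collapses the generating set to $\{e\beta_j\}_{j=1}^r$). Now I would invoke Theorem \ref{divmin}: since $I = Ae \mid M$, we have $eM \neq 0$, so $W = AeM \neq 0$, and $W \subseteq H$ by the first part, so $W$ is a nonzero $A$-submodule of the homogeneous component $H \cong nI$. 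It remains to upgrade "nonzero submodule of $H$" to "all of $H$", and for this I would compute the multiplicity of $I$ in $W$ and show it equals $n$. By Lemma \ref{iso-ev}, $\mathrm{Hom}_A(Ae, W) \cong eW$ as $\F$-vector spaces; but $eW = e(AeM)$, and since $eAe$ and $eM$ interact via $e(AeM) = (eAe)(eM)$... rather, more cleanly: $eW \supseteq e \cdot e M = eM$ because $e\beta_j \in W$ forces $e(e\beta_j) = e\beta_j$, and conversely $eW \subseteq eM$ since $W \subseteq$ the $A$-span of $eM$ and $e(AeM) \subseteq eM$ as $e \cdot ae m = (eae)m \in eM \cdot$... this needs $eae \in eAe$ acting, so I would verify $e(AeM) = eM$ directly: "$\subseteq$" because $eaem = (eae)em$... hmm, $(eae)(em)$ lies in $eM$ only if $eAe \cdot eM \subseteq eM$, which holds since $(eae)(em) = e(aeem) \in eM$. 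Thus $eW = eM$, so $\dim_\F(eW) = \dim_\F(eM)$, and by Corollary \ref{divmin1} the multiplicity of $I$ in $W$ is $\dim_\F(eW)/\dim_\F(eAe) = \dim_\F(eM)/\dim_\F(eAe) = n$, the multiplicity of $I$ in $M$ (equivalently in $H$). Since $W \subseteq H$, both semisimple with the same simple divisor $I$ appearing with the same multiplicity $n$, we get $W = H$.

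I expect the main obstacle to be the bookkeeping in the second inclusion: carefully justifying the identity $\sum_j A(e\beta_j) = AeM$ and then $e\big(\sum_j A(e\beta_j)\big) = eM$, using only $\F$-linearity of the $G$-action and idempotency of $e$, without circular appeals. Once $eW = eM$ is established, the multiplicity computation via Lemma \ref{iso-ev} and Corollary \ref{divmin1} is routine, and the final step — that two semisimple modules, one contained in the other, with equal multiplicity of the unique common simple type, must coincide — follows immediately from the Krull--Schmidt uniqueness already used in Lemma \ref{basisp}. An alternative to the multiplicity argument, if one prefers, is to note $M = \sum_j A\beta_j$ and apply the projection onto $H$ (available since $M$ is semisimple, $M = H \oplus H'$): this projection is an $A$-map killing every simple summand not isomorphic to $I$, in particular it sends $\beta_j \mapsto$ its $H$-component, and one checks $e$ acts as the identity on $H$ (if $e$ central) or at least that $eM$ surjects onto $H$ under inclusion-then-projection — but the cleanest uniform treatment is the dimension count above, so that is the route I would take.
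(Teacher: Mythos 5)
Your proposal is correct, but it reaches the conclusion by a genuinely different route than the paper. The paper proves the hard inclusion $H\subseteq \sum_{j}A(e\beta_{j})$ constructively: via Lemma \ref{com-hom} it identifies $eM\cong n(eAe)$, fixes an $A$-isomorphism $\phi\colon H\rightarrow nI$, pulls the standard generators $y_{i}$ of $nI$ back to elements $z_{i}\in eM$ with $H=\oplus_{i}Az_{i}$, and then expands each $z_{i}$ $\F$-linearly in the spanning set $\{e\beta_{j}\}$ of $eM$ (using Lemma \ref{mcicli} on orbits) to get $Az_{i}\subseteq \sum_{j}A(e\beta_{j})$; the reverse inclusion $\sum_{j}A(e\beta_{j})\subseteq H$ is essentially asserted there. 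You instead set $W=\sum_{j}A(e\beta_{j})=AeM$, observe $eW=eM$ (which is immediate: $W\subseteq M$ gives one inclusion, and $e\beta_{j}=e(e\beta_{j})\in eW$ gives the other, so the exploratory passage in your draft can be compressed to one line), and then use Lemma \ref{iso-ev}/Corollary \ref{divmin1} to compute that $I$ occurs in $W$ with multiplicity $\dim_{\F}(eW)/\dim_{\F}(eAe)=\dim_{\F}(eM)/\dim_{\F}(eAe)=n$, so that $W=H$ by a dimension count inside $H\cong nI$. Your argument is coordinate-free, avoids choosing an isomorphism $H\cong nI$, and explicitly supplies the easy inclusion ($A(e\beta_{j})$ is an epimorphic image of the simple module $Ae$, hence $0$ or isomorphic to $I$, hence inside $H$), which the paper leaves implicit; what the paper's construction buys is explicit cyclic generators $z_{1},\dots,z_{n}\in eM$ with $H=\oplus_{i}Az_{i}$, in keeping with its computational goals. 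One small polish point: for the easy inclusion argue via the kernel of $ae\mapsto ae\beta_{j}$ being $0$ or all of $Ae$, rather than citing Lemma \ref{schur}, since the codomain $A(e\beta_{j})$ is not known to be simple in advance.
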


\begin{proof}
Let $O(x)$ denote the orbit of $x$ under the left action of $G$ in $M$  for all $x\in M$. From  Lemma \ref{com-hom},  $eM\cong n(eAe)\leq nI $ (as $\F$-vector spaces). Let $y_{i}\in n(eAe)$ be such that its $i$-$th$ entry is $e$ and $0$ otherwise, for $i=1,...,n$. Then $Y:=\{y_{i}\;| \; i=1,...,n \}\subset n(eAe) $ is such that $nI=\oplus_{i=1}^{n}Ay_{i}$. For any $\phi\in Iso_{A}(H, nI)$, its restriction $\eta$ to $eM\subset H$  belongs to $ Iso_{\F}(eM,n(eAe))$. Thus $\eta^{-1}(Y)=\{z_{i}:=\eta^{-1}(y_{i})\;|\;  i=1,...,n \}$ is a subset of $eM$ such that  $H=\oplus_{i=1}^{n}Az_{i}$ (because $Y$ generates $nI$). On the other hand, $X:=\{e\beta_{1}, e\beta_{2},...,e\beta_{r}\}$ generates $eM$ as vector space. Hence $z_{i}=\sum_{j=1}^{r}c_{ij}e\beta_{j}$ for $i=1,...,n$, and so $gz_{i}=\sum_{j=1}^{r}c_{ij}g(e\beta_{j})$ for $i=1,...,n$ and $g\in G$. So $O(z_{i})\subset \left\langle \bigcup_{j=1}^{r}O(e\beta_{j}) \right\rangle_{\F}=\sum_{j=1}^{r}\left\langle O(e\beta_{j}) \right\rangle_{\F}=\sum_{j=1}^{r} A(e\beta_{j})\subseteq H $ for $i=1,...,n$, where the last of the equalities follows from Lemma \ref{mcicli}. Then $\left\langle O(z_{i})\right\rangle_{\F}= Az_{i}\subset \sum_{j=1}^{r} A(e\beta_{j}) $ (by Lemma \ref{mcicli}) and $\oplus_{j=1}^{r}Az_{i}=H\subseteq \sum_{j=1}^{r} (Ae\beta_{j})$.

\end{proof}

\begin{lem}\label{core}

Let $G=\{g_{i} : i=1,...,k\}\leq Aut_{\F}(\Fn)$ such that $A=\F[G]$ is semisimple. For each $x\in \Fn$, let $O(x)$ denote the orbit of $x$ under the action by evaluation of $G$. Let $I=Ae\leq A$  with $e$ a primitive idempotent. Let $M=A m $ and $N=An $ be cyclic $A$-submodules of $\Fn$. Let $B$ be the matrix whose $i$-th row is given by $g_{i}(m)$, and $C$ be the  $(k+1) \times n$ matrix whose $i$-th row is given by $g_{i}(m)$ if $i=1,...,k$ and $n$ otherwise. Then the following  hold: 

\begin{enumerate}
\item $M\cong I$ if and only if $dim_{\F}(I)=rank(B)$ and $e\cdot O(m)\neq \{ 0 \}$. Moreover, if $e$ is central, the condition $e\cdot O(m)\neq \{ 0 \}$ could be replaced by $e\cdot m\neq 0$.

\item  $N\leq M$ if and only if $rank (C) =rank(B)$. Moreover, If $N,M$ are simple $A$-modules, then $N=M$ if and only if $rank (C) =rank(B)$.

\end{enumerate}

\end{lem}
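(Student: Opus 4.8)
The plan is to reduce everything to one observation: by Lemma~\ref{mcicli}, $Ax=\langle O(x)\rangle_{\F}$ for any element $x$ of an $A$-module. First I would note that this makes the rows of $B$ an $\F$-spanning set of $M=Am$, so that $dim_{\F}(M)=rank(B)$, and makes the row space of $C$ equal to $M+\langle n\rangle_{\F}$, so that $rank(C)=dim_{\F}(M+\langle n\rangle_{\F})$. After that, the statement becomes a matter of combining these two dimension identities with the divisibility criterion of Theorem~\ref{divmin} and the fact that $I=Ae$ is simple (being a minimal ideal, since $e$ is primitive) and finite.

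For part~$1$ I would first record that $eM=e\cdot\langle O(m)\rangle_{\F}=\langle e\cdot O(m)\rangle_{\F}$, so that the hypothesis ``$e\cdot O(m)\neq\{0\}$'' is exactly ``$eM\neq 0$''. Then: if $M\cong I$, then $I\mid M$, hence $eM\neq 0$ by Theorem~\ref{divmin}, while $dim_{\F}(M)=dim_{\F}(I)=rank(B)$; conversely, if $eM\neq 0$ then Theorem~\ref{divmin} gives $I\mid M$, i.e. $M\cong I\oplus U$ for some $A$-module $U$, and the dimension hypothesis $rank(B)=dim_{\F}(I)$ forces $U=0$, whence $M\cong I$. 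For the ``moreover'' I would use that, when $e$ is central, $e\cdot g_{i}(m)=(eg_{i})\cdot m=(g_{i}e)\cdot m=g_{i}(e\cdot m)$, which vanishes if and only if $e\cdot m=0$ since $g_{i}$ is a linear automorphism; hence $e\cdot O(m)=\{0\}$ iff $e\cdot m=0$, and the condition $e\cdot O(m)\neq\{0\}$ can be replaced by $e\cdot m\neq 0$.

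For part~$2$ I would argue that $N=An\leq M$ if and only if $n\in M$ (the forward direction because $M$ is an $A$-submodule, the backward one because $n=1\cdot n\in An$), and that $n\in M$ if and only if $M+\langle n\rangle_{\F}=M$, that is, $rank(C)=dim_{\F}(M+\langle n\rangle_{\F})=dim_{\F}(M)=rank(B)$. For the simple case, if $N$ and $M$ are simple then $N\neq 0$, so $N\leq M$ forces $N=M$ by simplicity of $M$; the reverse implication is immediate, and hence $N=M$ if and only if $rank(C)=rank(B)$.

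The calculations here are routine; what I would watch out for is that the dimension bookkeeping is legitimate — this needs $M$ finite so that Theorem~\ref{divmin} applies, and $A$ semisimple so that $I\mid M$ really provides a complementary submodule — both of which hold by hypothesis. I expect the only genuinely non-formal point to be the central-idempotent reduction in part~$1$, since it is the one place where the group action, rather than just the $A$-module structure, is actually used (through invertibility of the $g_{i}$).
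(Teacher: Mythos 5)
Your argument is correct and follows essentially the same route as the paper's proof: Lemma~\ref{mcicli} identifies $rank(B)$ with $\dim_{\F}(Am)$ and the row space of $C$ with $M+\langle n\rangle_{\F}$, and then Theorem~\ref{divmin} plus the dimension comparison settles part~1, while membership $n\in M$ settles part~2; your extra explicit steps (e.g.\ $eM=\langle e\cdot O(m)\rangle_{\F}$ and the computation $e\cdot g_{i}(m)=g_{i}(e\cdot m)$ for central $e$) are just finer-grained versions of what the paper states tersely. The only blemish is cosmetic: in part~2 you swapped the labels ``forward'' and ``backward'' for the two justifications ($n=1\cdot n\in An$ gives $N\leq M\Rightarrow n\in M$, and $M$ being an $A$-submodule gives the converse), but both facts are present and the argument stands.
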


\begin{proof}

\begin{enumerate}

\item If $M\cong I$, then $I\mid M$ and by Theorem \ref{divmin}, $e\cdot M\neq 0$. So  $e\cdot O(m)\neq \{0\}$ (because $M=\langle O(m)\rangle_{\F}$, by Lemma \ref{mcicli}). Thus $e\cdot O(m)\neq \{0\}$ and $dim_{\F}(I)=dim_{\F}(M)=rank(B)$ (by Lemma \ref{mcicli}). Conversely, if $ \{0\}\neq e\cdot O(m)\subseteq e\cdot M$, Theorem \ref{divmin} implies that $I\mid M$. But $dim_{\F}(I)=rank(B)=dim_{\F}(M)$ (by Lemma \ref{mcicli}), therefore $I\cong M$ as $A$-modules. If $e$ is central, $e(O(m))=0$ if and only if $e(m)=0$.
\item If $N\leq M$, then $n\in M=\langle O(m)\rangle_{\F}$ and thus $rank (C) =rank(B)$.\\
Conversely, if $rank (C) =rank(B)$, then $n\in M=\langle O(m)\rangle_{\F}$ and so $g(n)\in M$ for all $g\in G$. Thus $O(n)\subseteq M$, and by Lemma \ref{mcicli}, $N=An=\langle O(n)\rangle_{\F}\leq M$. Moreover, if $N$ and $M$ are simple $A$-modules, then $N\leq M$ if and only if $N=M$, or equivalently, $rank (C) =rank(B)$.
\end{enumerate}
\end{proof}

\subsection{A method to compute $G$-invariant codes}\label{steps}

In \cite{QC-1}, Séguin describes an algorithm to obtain a unique generator for each $q$-ary $1$-generator $m$-quasi-cyclic code of lenght $n$ when $(q,n)=1$ and  the  factorization of $x^{m}-1$ is the same in $\gf{q^{n}}[x]$ as in $\F[x]$. He uses a natural transformation to translate the problem form $(\F[x]/ (x^m-1))^{n}$ to $\gf{q^{n}}[x]/ (x^m-1)$. Later, in \cite{QC-2}, J. Pei and X. Zhang offer a more general approach to the same question using ideas based on properties of semisimple modules. In this section we present a method to find a unique generating set for every $G$-invariant code of $\Fn$ respect to some subgroup $G$ of $Aut_{\F}(\Fn)$. As quasi-cyclic codes are a particular case of $G$-invariant codes, our approach is more general than the presented Séguin, J. Pei, and X. Zhang.\\
Let $G\leq Aut_{\F}(\Fn)$ such that $A=\F[G]$ is semisimple, $X$ be a  basic  set of idempotents for $\Fn$, and $H_{e}$ the homogeneous component associated with $Ae$ for all $e\in X$. By doing what is indicated in the Steps $1-3$ (presented below) for all $e\in X$, we can obtain all simple $A$-submodules of $\Fn$. Then, by doing what is indicated in Step $4$, all  $A$-submodules ($G$-invariant codes) of $\Fn$ are obtained.\\

\underline{Step 1}: (Computation of homogeneous components).  Determine the homogeneous component $H_{e}$ of $\Fn$, which can be done by using Lemma \ref{com-hom} (part $3$), or Corollary \ref{divmin1.5} (part $2$) if $e$ is central. Otherwise, by using  Theorem \ref{divmin3}.\\

\underline{Step 2}: (Computation of quotient sets). Once  $H_{e}$ is determined, considering that all cyclic $A$-modules, and therefore all simple $A$-modules, are generated as $\F$-vector spaces by the orbit of one generating element (by Lemma \ref{mcicli}), the quotient set $H_{e}/G=\{O(m)\;|\; m\in H_{e}\}$ of the orbits under the action by evaluation of $G$ on $H_{e}$ is determined.\\

\underline{Step 3}: (Determination of a unique generating orbit for every simple $A$-module). Determine  those orbits on $H_{e}/G$ that generate simple $A$-modules and obtain a unique generating orbit for every simple $A$-submodule contained in $H_{e}$. All the orbits in $H_{e}/G$ generate $A$-modules which have $I$ (up to isomorphim) as a unique  simple divisor  (by \cite[Proposition 3.20]{rep3}, part $2$). So, by Lemma \ref{core} (part $1$) and Theorem  \ref{divmin}, we just need to check whether an orbit generates a space with the right dimension $i=dim_{\F}(I)$, and obtain a unique generating orbit for every simple $A$-submodule. A  way to do this is as follows: First, if $A$ is non-commutative, compute all the orbits $O$ of  $L:=\{o\in H_{e}/G\; : \; |o|=min\{|u|\, : \, u\in I/G-\{\{0\}\}\, \}\; \}$ such that $dim_{\F}(\langle O\rangle)=i$. When $A$ is commutative, it is not necessary to compute  $L$. In this case, every orbit different from the orbit of the zero vector will generate a simple $A$-module (by  Lemma \ref{ci-simp}). Second, identify when two orbits in $L$ (when $A$ is non-commutative) or $H_{e}/G$ (when $A$ is commutative) generate the same simple $A$-module by using Lemma \ref{core} (part $2$) to obtain only one generating orbit for every simple $A$-submodule contained in $H_{e}$.\\

\underline{Step 4}: (Computation of direct sums). Once we have all the simple $A$-modules contained in $\Fn $ and the multiplicity of $Ae$ in $H_{e}$ (this last can be obtained by using Corollary \ref{divmin1}),  every $A$-submodule of each homogeneous component can be computed, in an efficient way, by using Algorithm \ref{algo1}. After that,  the $A$-submodules of $\Fn $ can be determined by taking all possible direct sums of these submodules, this time without any worry of wasting resources, i.e., with no risks of getting repetitions. Otherwise, the function presented in the proof of Lemma \ref{cont-inv} (part $1$) would not be a bijection.\\

Let us make some remarks on how to get a generating set for every $G$-invariant code. If we provide an  indexed list of all the simple modules contained in a homogeneous component of $\Fn$,  Algorithm \ref{algo1} gives a collection $Z$ of subsets of the set of indices. This collection satisfies that every $A$-submodule of the homogeneous component can be seen as a sum, indexed  by a unique of its elements, of some of these simple modules. Thus we could obtain a unique  generating set for every $A$-submodule of a homogeneous component of $\Fn$ that has been calculated  by the Algorithm \ref{algo1}. For that, we just need to take a non-zero element in each of the simple $A$-modules that appears in its decomposition. In general, for the $A$-submodules of $\Fn$ ($G$-invariant codes), as they are direct sums of $A$-submodules of the homogeneous components of $\Fn$, we just need to take the  unions of the generating sets of their summands.\\
We have just determined how to compute generating sets for $G$-invariant codes. Nonetheless, when working with a code, it is important to know a basis of it. The obvious way to obtain a basis for a $G$-invariant code is by computing it from a generating set. The following results will show another way to do so.

\begin{teo}\label{basisGinv}
Let $A=\F[G]$ be semisimple, $e$ be a primitive idempotent, $Ax$ a cyclic $A$-module isomorphic to $Ae$. Let $B:=\{b_1 , b_2 ,...,b_k\}$ an $\F$-basis for $Ae$. Then the following hold:

\begin{enumerate}

\item If $ex\neq 0$, then $B':=\{b_{1}x , b_{2}x ,...,b_{k}x\}$ is $\F$-basis for $Ax$.

\item There exists $g\in G$ such that $e(gx)\neq0$. Besides, If $e$ is central $ex\neq0$. 

\end{enumerate}

\end{teo}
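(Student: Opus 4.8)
The plan is to prove the two parts in the reverse order one might expect, since part 2 supplies the hypothesis needed to apply part 1 in practice, but logically part 1 is a clean statement about linear algebra inside a cyclic module, so I would present part 1 first as stated. For part 1, I would start from the fact that $Ax \cong Ae$ as $A$-modules, so in particular $\dim_{\F}(Ax) = \dim_{\F}(Ae) = k$; hence it suffices to show that $B' = \{b_1 x, \ldots, b_k x\}$ spans $Ax$ (a spanning set of size equal to the dimension is automatically a basis). Spanning is immediate: any element of $Ax$ has the form $ax$ for $a \in A$, write $a$ in terms of... no, that is not quite it — instead use that $ae \in Ae = \langle b_1, \ldots, b_k\rangle_{\F}$, so $ae = \sum_i \lambda_i b_i$; the point is to relate $ax$ to $aex$. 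Here is where the hypothesis $ex \neq 0$ enters: since $Ax \cong Ae$ is simple and $ex \neq 0$, the submodule $A(ex) \leq Ax$ is nonzero, hence equals $Ax$; moreover I claim $ex = x$ up to the identification, or more precisely $x \in Aex$. The cleanest route: consider the map $Ae \to Ax$, $a e \mapsto a e x$; this is a well-defined $A-$homomorphism (well-defined because if $ae = a'e$ then $aex = a'ex$), it is nonzero since $ex \neq 0$, and $Ae$ is simple, so by Schur's lemma (Lemma~\ref{schur}) it is an isomorphism onto its image, which is a nonzero submodule of the simple module $Ax$, hence all of $Ax$. Then $B'$ is the image of the basis $B$ under an $\F$-linear isomorphism, so it is a basis of $Ax$.

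For part 2, I would argue as follows. Since $Ax \cong Ae$ and $Ae \mid A$, the minimal ideal $I = Ae$ divides $Ax$, so $Ax$ is a nonzero homogeneous module all of whose simple summands are isomorphic to $I$. By Theorem~\ref{divmin} (or directly Lemma~\ref{core}, part 1), $I \mid Ax$ forces $e \cdot Ax \neq 0$. Now $Ax = \langle O(x)\rangle_{\F}$ by Lemma~\ref{mcicli}, where $O(x)$ is the $G$-orbit of $x$; hence $e \cdot Ax = e \cdot \langle O(x)\rangle_{\F} = \langle e \cdot O(x)\rangle_{\F}$, and since this is nonzero there must exist $g \in G$ with $e(gx) \neq 0$. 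For the central case: if $e$ is central then $e(gx) = g(ex)$ for every $g \in G$ (because $e$ commutes with $g \in A$), so $e(gx) \neq 0$ for some $g$ forces $ex \neq 0$; alternatively, $e \cdot Ax = A \cdot (ex)$ when $e$ is central, and this is nonzero only if $ex \neq 0$.

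The main obstacle, such as it is, is part 1: one must resist the temptation to prove linear independence of $B'$ directly (which would require tracking how a dependence $\sum \lambda_i b_i x = 0$ forces $\sum \lambda_i b_i \in \operatorname{ann}(x)$ and then arguing $\operatorname{ann}(x) \cap Ae = 0$) and instead package everything through the Schur-lemma isomorphism $Ae \xrightarrow{\sim} Ax$, $a e \mapsto a e x = ax$. Once that map is in hand, both spanning and independence of $B'$ follow for free, and the dimension count $\dim Ax = \dim Ae = k$ is not even needed. The only subtlety to check carefully is well-definedness of that map, i.e.\ that $aex$ depends only on $ae$ and not on $a$ — equivalently that $\operatorname{ann}(e) \cdot x \subseteq \{0\}$ need not hold, but what is needed is merely that $ae = a'e \Rightarrow aex = a'ex$, which is trivially true. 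Part 2 is routine given Lemma~\ref{mcicli} and Theorem~\ref{divmin}.
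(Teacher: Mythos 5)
Your proposal is correct and follows essentially the same route as the paper: part 1 via the $A$-module map $Ae\to Ax$, $ae\mapsto (ae)x$, which is nonzero because $ex\neq 0$ and hence an isomorphism by Schur's lemma (Lemma~\ref{schur}), carrying $B$ to $B'$; part 2 via Theorem~\ref{divmin} together with Lemma~\ref{mcicli} to get $e(gx)\neq 0$ for some $g\in G$, and centrality giving $e(gx)=g(ex)$, hence $ex\neq 0$. The extra remarks on well-definedness and surjectivity onto the simple module $Ax$ are harmless elaborations of the same argument.
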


\begin{proof}

\begin{enumerate}

\item If $f:Ae\longrightarrow Ax$ is given by $f(ae)=(ae)x$, then $f$ is a  non-zero morphism of $A$-modules (because $ex\neq 0$), and so it is an $A$-isomorphism (by Lemma \ref{schur}). Thus $f(B)=B'$ is an $\F$-basis for $Ax$.

\item Due that $Ax\cong Ae$, there exists $g\in G$ such that $e(gx)\neq 0$ (by Theorem \ref{divmin} and Lemma \ref{mcicli}). If $e$ is central $e(gx)=g(ex)\neq 0$ and thus $ex\neq 0$. 
\end{enumerate}
  
\end{proof}

If $A=\F[G]$ is semisimple, and $M$ is a finite (not necessarily simple) $A$-module, then Theorem \ref{basisGinv} can be applied to compute an $\F$-basis for $M$. If $M=\oplus_{i=1}^{t}Am_{i}$ is a decomposition of $M$ into simple submodules, we just need to know a basic set of idempotents for $M$, and which of the ideal generated by these idempotent is isomorphic to $Am_{i}$ for $i=1,...,t$. In this manner, we can determine an $\F$-basis for every summand $Am_{i}$, and hence we can compute a basis for $M$ just by taking the union of these. 

\begin{cor} \label{basisGinv-cor} Let $M$ be a cyclic $A$-module. Let $M=\oplus_{i=1}^{t}Am_{i}$ be the decomposition of $M$ into simple submodules. If $Am_{i}$ is isomorphic to an ideal generated by a central idempotent for $i=1,...,t$, then $M=A\left(\sum_{i=1}^{t}m_{i}\right)$.
\end{cor}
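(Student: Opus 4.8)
The plan is to reduce the statement about the cyclic module $M$ to an application of Theorem \ref{basisGinv} (part 2) together with the additivity of the decomposition. First I would fix a basic set of idempotents and identify, for each simple summand $Am_{i}$, a central primitive idempotent $e_{i}$ with $Ae_{i}\cong Am_{i}$; such $e_{i}$ exists by hypothesis. Since each $e_{i}$ is central, Theorem \ref{basisGinv} (part 2) gives $e_{i}m_{i}\neq 0$, so in fact $e_{i}m_{i}$ generates $Am_{i}$, i.e. $Am_{i}=A(e_{i}m_{i})$ (it is a nonzero submodule of the simple module $Am_{i}$).

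Next I would compute $e_{i}\left(\sum_{j=1}^{t}m_{j}\right)$. The key point is that the summands $Am_{j}$ lie in distinct homogeneous components — more precisely, $Am_{j}\cong Ae_{j}$ and for $j\neq i$ the ideals $Ae_{i}$ and $Ae_{j}$ are non-isomorphic (the decomposition $M=\oplus_{i=1}^{t}Am_{i}$ is into simple submodules with $M$ cyclic, hence by Lemma \ref{ci-simp}-type reasoning each simple type occurs with multiplicity one in $M$, so the $Ae_{i}$ are pairwise non-isomorphic and the $e_{i}$ may be taken pairwise orthogonal). Consequently $e_{i}m_{j}\in e_{i}\cdot Am_{j}=0$ for $j\neq i$, because $e_{i}\cdot Am_{j}$ would be a submodule of $Am_{j}$ on which the central idempotent $e_{i}$ acts as identity while $e_{j}$ annihilates it — impossible unless it is $0$ (formally: $e_i m_j = e_i e_j' m_j$ for the central idempotent $e_j'$ associated to $Am_j$, and $e_i e_j'=0$). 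Therefore $e_{i}\left(\sum_{j=1}^{t}m_{j}\right)=e_{i}m_{i}$, which is a nonzero element of $Am_{i}$.

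Finally, set $m:=\sum_{i=1}^{t}m_{i}$. For each $i$ we have $A(e_{i}m)=A(e_{i}m_{i})=Am_{i}$. Hence $Am\supseteq \sum_{i=1}^{t}A(e_{i}m)=\sum_{i=1}^{t}Am_{i}=M$; and $Am\subseteq M$ is automatic since each $m_{i}\in M$. Thus $M=A\left(\sum_{i=1}^{t}m_{i}\right)$, as claimed.

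I expect the main obstacle to be the bookkeeping in the second paragraph: justifying cleanly that the simple summands of the \emph{cyclic} module $M$ are pairwise non-isomorphic, so that the central idempotents associated to them can be taken mutually orthogonal and $e_{i}$ kills $m_{j}$ for $j\neq i$. This uses that a cyclic $A$-module is a quotient of $A$, hence (as $A$ is semisimple) isomorphic to a direct summand of $A$, and each minimal ideal type occurs in $A$ with the multiplicity equal to the degree of the corresponding matrix block — but for the \emph{cyclic} module the relevant fact is simply that no simple type repeats, which one can extract from Lemma \ref{ci-simp} applied inside each homogeneous component. Once that multiplicity-one observation is in place, everything else is a routine orthogonality computation.
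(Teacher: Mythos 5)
Your argument is correct and is essentially the derivation the paper intends: the corollary is stated without an explicit proof, and the expected route is exactly yours, namely $e_{i}m_{i}\neq 0$ from Theorem \ref{basisGinv} (part 2) together with $e_{i}m_{j}=0$ for $j\neq i$, which follows because cyclicity of $M$ forces the central (hence multiplicity-one in $A$) simple types $Am_{i}$ to be pairwise non-isomorphic, so the associated central primitive idempotents are orthogonal. Your handling of that multiplicity-one point, while briefly sketched, is sound, so no gap remains.
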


\section{What to do when a basic set of idempotents is not known}\label{seight}

Observe that the previous ideas work if a basic set of idempotents for $\Fn$ is known. In the following lines an alternative solution is discussed. Let $G\leq Aut_{\F}(\Fn)$ such that $\F[G]$ be semisimple. To find a basic set of primitive idempotents for $\F[G]$,  the results presented  in \cite{idemp1}, \cite{idemp2}, \cite{idemp3}, or \cite{split} might be useful. After having determined a basic set of idempotents for $\F[G]$, by using Lemma \ref{divmin}, a basic set of idempotents for $\Fn$ can be computed. Otherwise, considering that, in theory, the use of primitive idempotents of $\F[G]$  to solve the invariance problem is not strictly necessary, we could work using the following reasoning instead.

\begin{lem}\label{G-inv}
Let $G\leq Aut_{\F}(\Fn)$, $S=\{s_{1},...s_{r}\}$ be a generating set for $G$, $\langle s_{i} \rangle$ be the cyclic group generated by $s_{i}$ for $i=1,...,r$, and $C\subseteq \Fn$ be a code. The following conditions are equivalent:

\begin{enumerate}
\item $C$ is a $G$-invariant code

\item $C\in \bigcap_{i=1}^{r} \{ D\subseteq \Fn \, : \, D \text{ is } \F[\langle s_{i} \rangle]-\text{submodule of } \Fn \}$
\end{enumerate}

\end{lem}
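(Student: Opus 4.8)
The plan is to prove the equivalence directly by unwinding the definitions, using the fact that $G$ is generated by $S$.

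First I would recall that for any subgroup $H \leq Aut_{\F}(\Fn)$, a code $C \subseteq \Fn$ is an $\F[H]$-submodule of $\Fn$ (with the evaluation action $\left(\sum_{h} \lambda_h h\right)\cdot v = \sum_h \lambda_h h(v)$) if and only if $h(C) = C$ for all $h \in H$; equivalently, since $C$ is a vector space and each $h$ is $\F$-linear, it suffices that $h(C) \subseteq C$ for all $h$ in any generating set of $H$ (the reverse inclusion following from applying $h^{-1}$, which is a power of $h$ when $h$ has finite order, hence also maps $C$ into $C$). This observation will be the workhorse.

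For the implication $(1) \Rightarrow (2)$: if $C$ is $G$-invariant, then $g(C) = C$ for every $g \in G$; in particular $s_i(C) = C$ for each $i$, and since $\langle s_i \rangle \subseteq G$, the code $C$ is invariant under every element of $\langle s_i \rangle$, so $C$ is an $\F[\langle s_i \rangle]$-submodule of $\Fn$ for each $i$, i.e. $C$ lies in the intersection. For the converse $(2) \Rightarrow (1)$: if $C$ is an $\F[\langle s_i \rangle]$-submodule for every $i$, then $s_i(C) = C$ for all $i = 1,\dots,r$. Now take an arbitrary $g \in G$. Since $S$ generates $G$, we can write $g = s_{i_1}^{\pm 1} \cdots s_{i_m}^{\pm 1}$ for some indices $i_1,\dots,i_m$. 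Each factor $s_{i_j}^{\pm 1}$ maps $C$ onto $C$ (using $s_{i_j}(C) = C$, and noting $s_{i_j}^{-1}$ is a power of $s_{i_j}$ since $G$ is finite, so it too fixes $C$). Composing these maps gives $g(C) = C$. Hence $C$ is $G$-invariant.

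There is no real obstacle here — the statement is essentially a tautology once one notes that membership in an $\F[\langle s_i\rangle]$-submodule structure is exactly invariance under $s_i$, and that invariance under a generating set propagates to the whole group. The only point requiring a word of care is the passage from $s_i(C) \subseteq C$ to $s_i(C) = C$ (and handling inverses), which uses finiteness of $G$; I would state this explicitly rather than leave it implicit.
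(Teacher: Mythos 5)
Your proof is correct and follows essentially the same route as the paper: both reduce $G$-invariance to invariance under the generators $s_i$ and identify being an $\F[\langle s_i\rangle]$-submodule with $s_i(C)=C$. You simply spell out the propagation from generators to arbitrary $g\in G$ (and the handling of inverses via finiteness) more explicitly than the paper's one-line chain of equivalences.
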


\begin{proof}
 $g(C)=C$ for all $g\in G$ if and only if $s_{i}(C)=C$ for all $i=1,...,r$, which occurs if and only if $C$ is $\F[\langle s_{i} \rangle]$-submodule of $\Fn$ for $i=1,...,r$, or equivalently, $C\in \bigcap_{i=1}^{r} \{ D\leq \F \, \mid \, D \text{ is } \F[\langle s_{i} \rangle]-\text{submodule of } \Fn \}$.
\end{proof}

Thus, one could solve the invariance problem by finding the $\langle s_{i} \rangle$-invariant codes, where $S=\{s_{1},...,s_{r}\}$ is a generating set of $G$. This theoretic result is unpractical thought. However, by combining what is known up to now with Lemma \ref{G-inv}, we could be able to lower the computations. For example, in order to compute the $G$-invariant codes of $\Fn$, we could find  first the $N$-invariant codes for some subgroup $N$ of $G$, such that the idempotents of $\F[N]$ are easier to compute. Then, we could see which of these codes are invariant under the elements of $T=\{t_{i}\,\mid \, i=1,...,u\}$, where $T$ is a set of representatives of $G/N$. With that reasoning, the invariance problem could be solved with a more reasonable effort when a basic set of idempotents for $\Fn$ is not known.

\section{Examples of computations of $G$-invariant codes}\label{snine}

In this section we present some examples that illustrate the process of solving the invariance problem and other important results. We apply the steps $1-4$ presented in section \ref{steps}, and use \textit{SageMath} for doing all the calculations.

\begin{ex}
Consider $\gamma\in Aut_{\gf{2}}(\gf{2}^9)$ given by $\gamma=\sigma^3$, where $\sigma$ is the cyclic shift. Then, $\langle \gamma \rangle\cong C_{3}=\{1,x,x^2 \}$ and  $\gf{2}^9$ is an $\gf{2}[C_{3}]$-module. Let $A=\gf{2}[C_{3}]$. As $(|C_{3}|,2)=1$, $A$ is semisimple (by Theorem \ref{maschke}). Using Corollary \ref{divmin1.5} (part $1$) is easy to see that the minimal ideals $I_{0}:=A(x-1)$ and $I_{1}:=A(x^2+x+1)$ of $A$ divide $\gf{2}^9$. Thus, its simple $A$-submodules  ($C_3$-invariant codes)  are isomorphic to $I_{0}$ and $I_{1}$. As these ideals are not isomorphic, $\{I_{0}, I_{1}\}$ is a basic set of ideals for $\gf{2}^9$.\\
Now we are going to compute the $A$-submodules of the homogeneous components $H_{0}$ and $H_{1}$ of $\gf{2}^9$ associated with $I_{0}$ and $I_{1}$, respectively.\\

\underline{Step 1}: (Computation of homogeneous components).  As both $I_{0}$ and $I_{1}$ divide $\gf{2}^9$, and $A$ is a commutative ring, $H_{0}:=(\gamma + id)(  \gf{2}^9  )$ is the homogeneous component of $\gf{2}^9$ associated with $I_{0}$, and $H_{1}:=( \gamma^2 + \gamma +  id)(  \gf{2}^9  )$ is the homogeneous component associated with $I_{1}$ (by Corollary \ref{divmin1.5}, part $2$). Let $\beta$ be the canonical basis of $\gf{2}^9$. Then,  

\[
\begin{array}{cccc}
(\gamma + id)(\beta)=&\{100100000,&010010000,&001001000,\\
 &                       000100100,&000010010,&000001001,\\
 &                       100000100,&010000010,&001000001\}                  
\end{array}
\]

  and $(\gamma + \gamma^2 + id)(\beta)=\{100100100, 010010010, 001001001\}$ generate $H_{0}$ and $H_{1}$ as $\gf{2}$-vector space, respectively.\\

\underline{Steps 2}: (Computation of quotient sets). Let $H_{0}/\left\langle \gamma \right\rangle$ and $H_{1}/\left\langle \gamma \right\rangle$ be the quotient sets determined by the action by evaluation of $\langle \gamma \rangle$ on $H_{0}$ and $H_{1}$, respectively. Let $O(v)$ the orbit of $v$ for all $v\in \gf{2}^9$. Then, $H_{0}/\left\langle \gamma \right\rangle- \{O(000000000)\}$ is given by


\[
\begin{array}{ccc}
\{ $O(100100000)$,& $O(010010000)$,& $O(110110000)$,\\
   $O(001001000)$,& $O(101101000)$,& $O(011011000)$,\\
   $O(111111000)$,& $O(110010100)$,& $O(010110100)$,\\
   $O(101001100)$,& $O(001101100)$,& $O(111011100)$,\\
   $O(011111100)$,& $O(011001010)$,& $O(111101010)$,\\
   $O(001011010)$,& $O(101111010)$,& $O(111001110)$,\\
   $O(011101110)$,& $O(101011110)$,& $O(001111110)$\},
\end{array}
\]

 and $H_{1}/\left\langle \gamma \right\rangle - \{O(000000000)\}$ is given by

 \[
\begin{array}{ccc}
\{O(111111111),& O(100100100),& O(010010010),\\
 O(110110110),& O(001001001),& O(101101101),\\
  O(011011011)\}.                 
\end{array}
\]

\underline{Steps 3}: (Determination of a unique generating orbit  for every simple $A$-module).   As $A$ is commutative, every orbit different from the orbit of the zero vector in $H_{0}/\left\langle \gamma \right\rangle$ and $H_{1}/\left\langle \gamma \right\rangle$  generates a simple $A$-module (by Lemma \ref{ci-simp}). Moreover, as every orbit in  $H_{0}/\left\langle \gamma \right\rangle$ has size $3$ and $|I_{0}|=4$, every simple $A$-module in $H_{0}$ has a unique generating orbit. Similarly, as every orbit in  $H_{1}/\left\langle \gamma \right\rangle$ has size $1$ and $|I_1|=2$, every simple $A$-module in $H_{1}$ has a unique generating orbit.\\
If we determine a unique generating vector for every  simple $\gf{2}[C_3]$-submodule  of $\gf{2}^9 $ that is isomorphic to $I_{0}$, we get 

\[
\begin{array}{cccc}

$$n_{0}$=100100000$,& $$n_{1}$=010010000$,& $$n_{2}$=110110000$,\\
 $$n_{3}$=001001000$,& $$n_{4}$=101101000$,& $$n_{5}$=011011000$,\\
  $$n_{6}$=111111000$,& $$n_{7}$=110010100$,& $$n_{8}$=010110100$,\\
  $$n_{9}$=101001100$,& $$n_{10}$=001101100$,& $$n_{11}$=111011100$,\\
  $$n_{12}$=011111100$,& $$n_{13}$=011001010$,& $$n_{14}$=111101010$,\\
  $$n_{15}$=001011010$,& $$n_{16}$=101111010$,& $$n_{17}$=111001110$,\\
  $$n_{18}$=011101110$,& $$n_{19}$=101011110$,& $$n_{20}$=001111110$.
\end{array}
\]

If we determine a unique generating vector for every  simple $\gf{2}[C_3]$-submodule  of $\gf{2}^9 $ that is isomorphic to $I_{1}$, we get 

\[
\begin{array}{cccc}
$$m_{0}$=100100100$,& $$m_{1}$=010010010$,& $$m_{2}$=110110110$,\\
 $$m_{3}$=001001001$, & $$m_{4}$=101101101$,& $$m_{5}$=011011011$,\\
 $$m_{6}$=111111111$.                   
\end{array}
\]

Let $N_{i}:=An_{i}$ for $i=0,...,20$, and $M_{j}:=Am_{j}$ for $j=0,...,6$.\\
Note that $h_{0}:=dim_{\gf{2}}(H_{0})=dim_{\gf{2}}(\langle (\gamma + id)(\beta) \rangle)=6$ and $h_{1}:=dim_{\gf{2}}(H_{1})=dim_{\gf{2}}(\langle (\gamma + \gamma^2 + id)(\beta) \rangle)=3$. Then, the multiplicities of $I_{0}$ and $I_{1}$ in $H_{0}$ and $H_{1}$ are, in both cases, equal to $3$ (by Corollary \ref{divmin1}). Thus  $H_{0}\cong 3I_{0}$ and $H_{1}\cong 3I_{1}$. Hence $\binom{3I_{0}}{I_{0}}_{2}=\frac{2^{3\times 2}-1}{2^{2}-1}=21$, $\binom{3I_{1}}{I_{1}}_{2}=\frac{2^{3\times 1}-1}{2^{1}-1}=7$ (by Corollary \ref{simp-part-cor}), which coincides with the calculations we have just made.\\

\underline{Steps 4}: (Computation of direct sums). By  using Algorithm \ref{algo1}  all the $A$-submodules of $H_{0}$ and $H_{1}$  can be computed. Let $(F_{0},Z_{0}, X_{0})$, and $(F_{1},Z_{1},X_{1})$ be the outputs given by Algorithm \ref{algo1} for the inputs $(\{N_{i}\, \mid \, i=0,...,20\},3)$ and $(\{M_{i}\, \mid \, i=0,...,6\},3)$, respectively.\\
Then the $A$-submodules of $H_{0}$ isomorphic to $2I_{0}$ are of the form $\oplus_{j\in l}N_{j}$ with $l\in \binom{Z_{0}}{2} $, and
$\binom{Z_{0}}{2}=\{\left\{0, 1\right\},\left\{0, 3\right\},\left\{0, 5\right\}, \left\{0, 13\right\},\left\{0, 15\right\},\left\{1, 3\right\},$\\
$\left\{1, 9\right\}, \left\{1, 10\right\},\left\{2, 3\right\},\left\{2, 4\right\},\left\{9, 2\right\},\left\{2, 10\right\},\left\{3, 7\right\},\left\{8, 3\right\},$\\ 
$\left\{4, 7\right\},\left\{8, 4\right\},\left\{5, 7\right\},\left\{8, 5\right\},\left\{6, 7\right\},\left\{1, 4\right\},\left\{8, 6\right\}\}$

The $A$-submodules of $H_{1}$ isomorphic to $2I_{1}$ are of the form $\oplus_{j\in l}M_{j}$ with $l\in \binom{Z_{1}}{2} $, and
$\binom{Z_{1}}{2}= \{ \left\{0, 1\right\},\left\{0, 3\right\},\left\{0, 5\right\},\left\{1, 3\right\},\left\{1, 4\right\},\left\{2, 3\right\},\left\{2, 4\right\} \}.$ 

There is only one $A$-submodule of $H_{0}$ ($H_{1}$) isomorphic to $3I_{0}$ ($3I_{1}$) which is the homogeneous component $H_{0}$ ($H_{1}$) itself. This can be considered as  $\oplus_{j\in l}N_{j}$ ($\oplus_{j\in l}M_{j}$) with $l\in \binom{Z_{1}}{3} =\{\{0,1,3\}\}$ ($l\in \binom{Z_{0}}{3}=\{\{0,1,3\}\} $).\\ 
By Lemma \ref{cont-inv} (part $1$), the collection  $W:=\{ U\oplus V\, \mid \, U\in F_{0} \, \wedge \,    V\in F_{1} \}$, is precisely the collection of all the $C_{3}$-invariant codes of $\gf{2}^9$. If we take two different elements $(W_{0}, W_{1}),(T_{0}, T_{1})\in F_{0}\times F_{1}$, then $W_{0}\oplus W_{1}\neq T_{0}\oplus T_{1}$. Otherwise, the function given in the proof of lemma  \ref{cont-inv} (part $1$) would not be a bijection. Thus, we can be sure that when calculating $W$ no element will be computed more than once.\\  
Note that  $\binom{2I_{0}}{I_{0}}_{2}=\frac{2^{2\times 2} -1}{2^{2} -1}=5$ and $\binom{2I_{1}}{I_{1}}_{2}=\frac{2^{2\times 1} -1}{2^{1} -1}=3$. Hence  $\binom{3I_{0}}{2I_{0}}_{2}=\frac{\binom{3I_{0}}{I_{0}}_{2}\left[ \binom{3I_{0}}{I_{0}}_{2} - 1 \right]}{\binom{2I_{0}}{I_{0}}_{2}\left[ \binom{2I_{0}}{I_{0}}_{2}- 1\right]}=\frac{21\times 20}{5\times 4}=21= |\binom{Z_{0}}{2}|$ and $\binom{3I_{1}}{2I_{1}}_{2}=\frac{\binom{3I_{1}}{I_{1}}_{2}\left[ \binom{3I_{1}}{I_{1}}_{2}- 1 \right]}{\binom{2I_{1}}{I_{1}}_{2}\left[ \binom{2I_{1}}{I_{1}}_{2}-1\right]}=\frac{7\times 6}{3\times 2}=7=|\binom{Z_{1}}{2}|$ (by Lemma \ref{bcoef}), which is consistent with our computations.\\
Now we are going to count the total of $C_{3}$-invariant codes and $1$-generator $C_{3}$-invariant codes. Let $S(M):=\{C\subseteq M \, \mid \, C \; \text{is\, a \,}  C_{3}-invariant \, code\}$ for all $C_{3}$-invariant code $M\subseteq \gf{2}^9$. Then, by Lemma \ref{cont-inv} (part $2$),

\begin{eqnarray*}
\prod_{j=0}^{1}|S(H_{j})|&=&\prod_{j=0}^{1}\left[ \sum_{t_{j}\in \{1,...,n_{j}\}} \binom{n_{j}I_{j}}{t_{j}I_{j}}_{2} + 1 \right]\\
                         &=& \left[ \sum_{t_{0}=1}^{3} \binom{3I_{0}}{t_{0}I_{0}}_{2} + 1 \right]\times \left[ \sum_{t_{1}=1}^{3} \binom{3I_{1}}{t_{1}I_{1}}_{2} + 1 \right]\\
                         &=&[(21+21+1)+1]\times[(7+7+1)+1]\\
                         &=&704
\end{eqnarray*}  

and as $W=S(\gf{2}^9)$, $|W|=|S(\gf{2}^9)|= \prod_{j=0}^{1}|S(H_{j})|=704$. Let $X=\{C\subseteq \gf{2}^9 \, \mid \, C \, \text{is\, a \, 1-generator\,  } C_{3}-\text{invariant \, code} \}$. Then, by Lemma \ref{count1} (part $3$),

\begin{eqnarray*}
|X|&=& \left[ \prod_{j=0}^{1}\left(  \frac{q^{n_{j}h_{j}}-1}{q^{h_{j}}-1} + 1\right) \right]-1\\
   &=&\left[ \left(  \frac{2^{3\times 2}-1}{2^{3}-1} + 1\right)\times \left(  \frac{2^{3\times 1}-1}{2^{3}-1} + 1\right) \right] -1\\
   &=&[(21+1)\times(7+1)]-1=175.
\end{eqnarray*}

The $1$-generator $C_{3}$-invariant codes of $\gf{2}^9$ are   those isomorphic to $I_{0}$, $I_{1}$, or $I_{0}\oplus I_{1}$ (by Lemma \ref{count1}, part 1). Thus the elements $n_{i}$, $m_{j}$, and $n_{i}+ m_{j}$ for $i=0,...,20$ and $j=0,...,7$, are generator elements of these codes (by Corollary \ref{basisGinv-cor}). Hence we have just determined a unique generator for every $1$-generator $C_{3}$-invariant code (i.e., $1$-generator $3$-quasi-cyclic code).\\

Note that, if $1\leq l_{0},l_{1} \leq 3$, and we want to determine the submodules of $\gf{2}^9$   isomorphic to $l_{0}I_{0}\oplus l_{1}I_{1}$, we just need to consider all the sum of the modules isomorphic $l_{0}I_{0}$ and $l_{1}I_{1}$.\\


\end{ex}

\begin{ex}\label{ex-nocom}
Let $A=\gf{5}[S_{3}]$. As $(|S_{3}|,5)=1$, $A$ is semisimple (by Theorem \ref{maschke}). Let $\beta$ be the canonical basis of $\gf{5}^9$, and $x, y\in Aut_{\gf{5}}(\gf{5}^9)$ be such that

 $$[x]_{\beta}= \begin{bmatrix}
0& 3& 4& 4& 0& 3& 3& 2& 1\\
4& 1& 0& 2& 0& 3& 1& 3& 2\\
0& 1& 2& 4& 4& 2& 1& 2& 2\\
3& 0& 2& 3& 3& 0& 2& 2& 3\\
2& 0& 3& 3& 4& 3& 3& 2& 0\\
2& 0& 1& 4& 2& 0& 1& 4& 0\\
0& 4& 4& 4& 3& 2& 1& 1& 0\\
2& 1& 1& 4& 1& 3& 3& 2& 3\\
1& 1& 2& 4& 1& 1& 0& 2& 2
\end{bmatrix}$$

and

$$[y]_{\beta}= \begin{bmatrix}
0& 2& 0& 0& 1& 2& 0& 0& 0\\
4& 2& 0& 0& 4& 3& 0& 0& 0\\
2& 3& 4& 1& 0& 4& 1& 1& 1\\
1& 1& 0& 2& 2& 3& 0& 3& 4\\
4& 3& 0& 1& 1& 4& 0& 3& 4\\
2& 4& 0& 2& 3& 0& 0& 1& 3\\
1& 3& 0& 3& 4& 4& 1& 3& 0\\
0& 1& 0& 3& 2& 0& 0& 4& 0\\
0& 2& 0& 0& 1& 4& 0& 2& 0\\
\end{bmatrix}$$

Let $G=\langle x,y \rangle$. Then, $f:G\longrightarrow S_{3}= \langle a,b \mid a^3=b^2=1, \, bab=a^2 \rangle =\{1,a,a^2,b,ba,ba^2\}$ given by $x\mapsto a$ and $y\mapsto b$ is an isomorphism. By Example\ref{ex1}, $e_{0}=1+a+a^2+b+ba+ba^2, e_{1}=1+a+a^2 +4b+4ba+4ba^2, e_{2}=2+3a^2 +2b +3ba^2$ form a basic set of idempotents for $\gf{5}[S_{3}]$. Using Theorem \ref{divmin} is easy to see that the minimal ideal $I_{j}:=Ae_{j}$ divides $\gf{5}^9$ for $j=0,1,2$. Thus $\{I_{0},I_{1},I_{2}\}$ is  a basic set of ideals for $\gf{5}^9$. Furthermore, using Corollary \ref{divmin1} is easy to see that the multiplicity of $I_{j}$ in  $\gf{5}^9$ is $j+1$  for $j=0,1,2$.\\ 

\underline{Step 1}: (Computation of homogeneous components). Note that $e_{0}, e_{1}$ are central elements of $A$ and both $I_{0}$, and $I_{1}$ divide $\gf{5}^9$. Then, by Lemma \ref{com-hom}, $H_{0}:=e_{0}\gf{5}^9=(id+x+x^2+y+yx+yx^2)(  \gf{5}^9  )=\left\langle 140442324 \right\rangle_{\gf{5}} $ is the homogeneous component associated with $I_{0}$, and $H_{1}:=e_{1}(  \gf{5}^9  )=(id+x+x^2 +4y+4yx+4yx^2)(  \gf{5}^9  )=\left\langle \{ 10020401,
001112001 \} \right\rangle_{\gf{5}}  $ is the homogeneous component associated with $I_{1}$. On the other hand, by Theorem \ref{divmin3}, the set $\{e_{2}v\, \mid \, v\in \beta\}= \{ 140111222, 142330020, 322033442,
414332434, 322021431,$ 
$ 323102302, 232403203, 231322332,411020204\}$ generates the homogeneous component $H_{2}$ associated with $I_{2}$, as $A$-module.\\

\underline{Steps 2}: (Computation of quotient sets). Let $H_{i}/G$ be the quotient set determined by the action by evaluation of $G$ on $H_{i}$ for $i=0,1,2$. Let $O(v)$ be the orbit of $v$ for all $v\in \gf{5}^9$. Then $H_{0}/G -\{O(000000000)\}$ is given by 

\[
\begin{array}{ccc}
\{  O(410113231) & O(140442324),& O(230334143),\\
 O(320221412)\}
\end{array}
\]

 and $H_{1}/G-\{O(000000000)\}$ is given by
 \[
 \begin{array}{ccc}
\{ O(442204101),& O(110020401),& O(220040302),\\
 O(001112001),& O(111132402),& O(221102303),\\
 O(331122204), & O(441142100), & O(002224002),\\
 O(112244403), & O(222214304), & O(332234200)\}              
\end{array}
\]
 

As $|H_{2}/G|=2667$, we prefer not to write it explicitly. Contrary to the previous example,   not all the orbits in $H_{2}/G$ generate simple $A$-submodules.\\

\underline{Steps 3}: (Determination of a unique generating orbit  for every simple $A$-module). As the multiplicity of $I_{0}$ and $I_{1}$ in $A$ is $1$, every orbit in $H_{0}/G$ and $H_{1}/G$ generates a simple $A$-module (by Lemma \ref{ci-simp}). However, in none of the cases there exists a simple $A$-module with a unique generating orbit. On the other hand, in $H_{2}/G$ there are orbits that do not generate simple $A$-modules, these are precisely those that generate $A$-modules isomorphic to $2I_{2}$. For example, $O(100000032)=\{100000032, 130021232,$ $320034341, 042311021, 012314343,$ 
$ 001430241\}$. We use the instructions given in Step 3 (Section \ref{steps}) to get a unique generating orbit of $H_{i}/G$ for each simple $A$-submodule of $H_{i}$ for $i=0,1,2$. When determining a unique generating vector for the  simple $A$-submodules  of $\gf{5}^9 $ isomorphic to $I_{0}$, we get the vector $l_{0}=140442324$. When  doing the same for the  simple $A$-submodules  of $\gf{5}^9 $ isomorphic to $I_{1}$, we get

\[
\begin{array}{ccc}
 $$m_{0}$=110020401$,& $$m_{1}$=001112001$,& $$m_{2}$=111132402$,\\
 $$m_{3}$=221102303$,& $$m_{4}$=331122204$,& $$m_{5}$=441142100$,
\end{array}
\]

and when  doing it for the  simple $A$-submodules  of $\gf{5}^9 $ isomorphic to $I_{2}$, we get

\[
\begin{array}{cccc}
$$n_{0}$=221100102$,& $$n_{1}$=103100210$,& $$n_{2}$=001331412$,\\
 $$n_{3}$=323311232$,& $$n_{4}$=234001133$,& $$n_{5}$=131401423$,\\
  $$n_{6}$=221011401$,& $$n_{7}$=342101443$,& $$n_{8}$=413201013$,\\
  $$n_{9}$=213021224$,& $$n_{10}$=412101120$,& $$n_{11}$=144011334$,\\
  $$n_{12}$=411001232$,& $$n_{13}$=430201003$,& $$n_{14}$=234301142$,\\
  $$n_{15}$=042010323$,& $$n_{16}$=234141023$,& $$n_{17}$=234110040$,\\
  $$n_{18}$=414301401$,& $$n_{19}$=140111222$,& $$n_{20}$=044411334$,\\
  $$n_{21}$=001100443$,& $$n_{22}$=320321100$,& $$n_{23}$=310110242$,\\
  $$n_{24}$=400401244$,& $$n_{25}$=013201040$,& $$n_{26}$=324111111$,\\
  $$n_{27}$=121001341$,& $$n_{28}$=033301213$,& $$n_{29}$=020311031$,\\
  $$n_{30}$=322021431$.
\end{array}
\]

 Let $L_{0}:= Al_{0}$. Let $M_{i}:=Am_{i}$ and $N_{j}:=Am_{j}$ for $i=0,...,5$ and $j=0,...,30$. $I_{j}$ has multiplicity $j+1$ in $\gf{5}^9$ for $j=0,1,2$. So $H_{0}\cong I_{0}$, $H_{1}\cong 2I_{1}$, and $H_{2}\cong 3I_{2}$. As the multiplicity of $I_{1}$ in $A$ is $1$, $\binom{2I_{1}}{I_{1}}_{5}=\frac{5^{1\times 2}-1}{5^{5}-1}=6$ (by Corollary \ref{simp-part-cor}), which coincides with our calculation. Observe that, up to now; we do not have a formula to calculate $\binom{3I_{2}}{I_{2}}_{5}=31$.\\

\underline{Steps 4}: (Computation of direct sums).  By using Algorithm \ref{algo1},  we can compute all the $A$-submodules of  $H_{2}$. $H_{1}$ only has one non-simple $A$-submodule, which is $H_{1}$ itself, and $H_{0}$ does not have non-simple $A$-submodules, because $H_{0}$ is simple. Let $(F_{2},Z_{2},X_{2})$ be the output given by Algorithm \ref{algo1} for the input $(\{N_{i}\, \mid \, i=0,...,30\},3)$.\\
The $A$-submodules of $H_{1}$ isomorphic to $2 I_{2}$ are of the form $\oplus_{j\in l}N_{j}$ with $l\in \binom{Z_{2}}{2} $, and 

$\binom{Z_{2}}{2}=\{
\left\{0, 1\right\}, \left\{0, 6\right\},\left\{0, 8\right\},\left\{0, 10\right\}, \left\{0, 11\right\},\left\{0, 14\right\},$\\
$\left\{1, 6\right\},\left\{1, 7\right\},\left\{8, 1\right\},\left\{1, 11\right\},\left\{1, 15\right\},\left\{2, 6\right\},\left\{2, 7\right\},\left\{9, 2\right\}$\\ 
$\left\{2, 13\right\},\left\{2, 15\right\},\left\{3, 6\right\},
\left\{3, 7\right\},\left\{9, 3\right\},\left\{10, 3\right\},\left\{11, 3\right\},$\\
$\left\{4, 6\right\}, \left\{4, 7\right\},\left\{8, 4\right\},\left\{9, 4\right\},\left\{12, 4\right\},\left\{5, 6\right\},\left\{5, 7\right\},\left\{8, 5\right\},$\\
$\left\{9, 5\right\}, \left\{12, 5\right\} \}.$

There is only one $A$-submodule of $H_{2}$ isomorphic to $3I_{2}$, which is $H_{2}$ itself. So $W:=\{ L_{0}\oplus U\oplus V\, \mid \, U\in F_{1} \, \wedge \,    V\in F_{2} \}$ is the collection of all the $S_{3}$-invariant codes of $\gf{5}^9$ (by Lemma \ref{cont-inv}).\\  
By doing some computations, we got that $N_{0},N_{1}, N_{2},N_{3},N_{4},N_{5}$ are all the simple $A$-submodules contained in $N_{0}\oplus N_{1}\cong 2I_{2}$. So  $\binom{2I_{2}}{I_{2}}_{5}=6$. Hence $\binom{3I_{2}}{2I_{2}}_{5}=\frac{\binom{3 I_{2}}{I_{2}}_{5}\left[ \binom{3 I_{2}}{I_{2}}_{5} - 1 \right]}{\binom{2 I_{2}}{I_{2}}_{5}\left[ \binom{2 I_{2}}{I_{2}}_{5} - 1 \right]}=31=|\binom{Z_{2}}{2}|$, which coincides with our  calculations.\\
To finish we offer an example over how to compute a basis for a given $G$-invariant code using Theorem \ref{basisGinv}. Observe that $\{e_{2}, ae_{2}\}$, $\{e_{0}\}$, $\{e_{1}\}$ are basis for $I_{2}$, $I_{0}$, and $I_{1}$, respectively. Consider the $A$-submodule $C:=(N_{0}\oplus N_{1})\oplus M_{0} \oplus L_{0}\cong 2I_{2}\oplus I_{1}\oplus I_{0}$. As $e_{2}(n_{0})$, $e_{2}(xn_{1})$, $e_{1}(m_{0})$, and $e_{0}(l_{0})$ are non-zero, then,  by Theorem \ref{basisGinv} (part $1$),
\begin{eqnarray*}
\gamma &:=&(\{e_{2}\cdot n_{0},ae_{2}\cdot n_{0}\}\cup \{e_{2} \cdot x(n_{1}),ae_{2}\cdot x(n_{1})\})\cup\\
        & & \{e_{1}\cdot m_{0}\} \cup \{e_{0}\cdot l_{0}\}\\
        &=&(\{ 144030301, 334400403\}\cup \{ 414320423, 04313042\\
        & & 2\})\cup \{ 110020401\} \cup \{140442324\}\text{\ is a basis for $C$.}
\end{eqnarray*}
\end{ex}

\section{Conclusion}

The $G$-invariant codes were studied as $\F[G]$-submodules of $\Fn$. The $\F[G]$-isomorphisms between $\Fn$ and $\F[G]\times \cdots \times \F[G] $ ($t$-times, where $t=n/|G|$) were classified. Then, by applying results of semisimple representation theory, a method to solve the invariance problem was developed. The Gaussian binomial coefficient for finite $\F[G]$-modules was introduced and studied. This concept turned out to be useful for counting  all the $G$-invariant codes and $1$-generator $G$-invariant codes.

\section{Acknowledgements}The authors thank to professor Ismael Gutierrez and to MSc Jose Antonio Sosaya for their valuable  comments and suggestions that helped to improve the final version of this work.

\end{document}